\newtheorem{theorem}{Theorem}[section]
\newtheorem{lemma}[theorem]{Lemma}
\newtheorem{proposition}[theorem]{Proposition}
\newcommand{\paren}[1]{\left(#1\right)}
\newcommand{\jump}[1]{\left[#1\right]}
\newcommand{\D}[2]{\frac{d#1}{d#2}}
\newcommand{\PD}[2]{\frac{\partial#1}{\partial#2}}
\newcommand{\PDD}[3]{\frac{\partial^{#1}{#2}}{\partial{#3}^{#1}}}
\newcommand{\at}[2]{\left. #1 \right|_{#2}}
\newcommand{\mb}[1]{\mathbf{#1}}
\newcommand{\bm}[1]{\boldsymbol{#1}}
\newcommand{\abs}[1]{\left\lvert #1 \right\rvert}
\newcommand{\dual}[2]{\left\langle #1,#2 \right\rangle}
\numberwithin{equation}{section}
\begin{document}

\begin{frontmatter}

\title{A Model of Electrodiffusion 
and Osmotic Water Flow and its Energetic Structure}

\author[umn]{Yoichiro Mori\corref{cor}}
\ead{ymori@math.umn.edu}
\author[psu]{Chun Liu}
\ead{liu@math.psu.edu}
\author[rush]{Robert S. Eisenberg}
\ead{beisenbe@rush.edu}

\cortext[cor]{Corresponding Author}
\address[umn]{School of Mathematics, University of Minnesota, 
Minneapolis, MN 55414, U.S.A.}
\address[psu]{Department of Mathematics, Pennsylvania State University,
University Park, PA 16802, U.S.A.}
\address[rush]{Department of Molecular Biophysics and Physiology, 
Rush University Medical Center, Chicago, IL 60612, U.S.A.}

\begin{abstract}
We introduce a model for ionic electrodiffusion and osmotic water flow 
through cells and tissues. 
The model consists of a system of partial differential equations 
for ionic concentration and fluid flow with interface conditions 
at deforming membrane boundaries. 
The model satisfies a natural energy equality, 
in which the sum of the entropic, elastic and electrostatic 
free energies are dissipated through viscous, electrodiffusive 
and osmotic flows. 
We discuss limiting models when certain dimensionless parameters are small. 
Finally, we develop a numerical scheme for the one-dimensional 
case and present some simple applications of our model to cell volume control. 
\end{abstract}

\end{frontmatter}

\section{Introduction}
Systems in with important electrodiffusion and osmotic water flow 
are found throughout life 
\cite{pappenheimer1987silver,boron2008medical,davson1970textbook}. 
Such systems include brain ionic homeostasis 
\cite{somjen2004ions,kahle2009molecular}, 
fluid secretion by epithelial systems \cite{hill2008fluid}, 
electrolyte regulation in the kidney 
\cite{koeppen2007renal,layton2009mammalian}, 
fluid circulation in ocular systems 
\cite{mathias2007lens,fischbarg2005mathematical}, 
and water uptake by plants \cite{taiz2010plant}. 

Mathematical models of electrodiffusion and/or osmosis have 
been proposed and used in many physiological contexts, 
and have formed a central topic in biology for a very 
long time \cite{pappenheimer1987silver,Hille,aidley_physiology_1998}. 
Some are simple models using ordinary differential equations 
while others are more detailed in that they include partial 
differential equations (PDEs) describing the spatial variation
of the concentration and flow fields
\cite{KS,hoppensteadt2002modeling,
weinstein1994mathematical,yi2003mathematical,shapiro2001osmotic,
lee2008immersed,mathias1985steady}. 
In this paper, we propose a system of PDEs 
that describes ionic electrodiffusion and osmotic  
water flow at the cellular level. 
To the best of the authors' knowledge, this is the first model 
in which osmotic water flow and electrodiffusion have been treated 
within a unified framework including cells
with deformable 
and capacitance-carrying membranes.
A salient feature of our model is that it possesses a natural thermodynamic 
structure; it satisfies a free energy equality.
As such, the present work may be viewed 
as a generalization of the classical treatment 
of osmosis and electrodiffusion in irreversible thermodynamics 
to spatially extended systems 
\cite{katzir1965nonequilibrium,kedem1958thermodynamic,kjelstrup2008non}.
 
To introduce our approach, we first focus attention on uncharged systems. 
In Section \ref{diffosm}, we treat the case in which the
diffusing chemical species carry no electric charge.
We write down equations that are satisfied by the water velocity
field $\mb{u}$, the chemical concentrations $c_k, k=1,\cdots,N$ 
and the membrane 
position $\mb{X}$. The model is shown to satisfy a free energy 
equality in which the sum of the entropic free energy and 
the elastic energy of the membrane is dissipated through
viscous water flow, bulk diffusion,
transmembrane chemical fluxes and osmotic water flow. 
One interesting consequence of this analysis 
is that the classical van t'Hoff
law of osmotic pressure arises naturally 
from the requirement that osmotic water flow be dissipative.
We note that models with the similar purpose of 
describing diffusing non-electrolytes and their interaction 
with osmotic water flow across moving membranes, 
have been proposed in the literature
\cite{lee2008immersed,atzberger2009microfluidic,layton2006modeling}
(in the Appendix \ref{leeatz}, we discuss the relationship of our model 
with that of \cite{lee2008immersed, atzberger2009microfluidic}).

In Section \ref{elecdiff}, we extend the model of Section \ref{diffosm}
to treat the case of ionic electrodiffusion. We introduce 
the electrostatic potential $\phi$ which satisfies the Poisson 
equation. The membrane now carries capacitance, which can result 
in a jump in the electrostatic potential across the membrane. 
We shall see that this model also satisfies a free energy equality.
The free energy now includes an electrostatic contribution.
The verification of the free energy equality in this case is not as 
straightforward as in the non-electrolyte case, and  
requires a careful examination of surface terms.

In Section \ref{simple}, we discuss simplifications of our 
model. We make the system dimensionless 
and assess the relative magnitudes of the terms in 
the equations. An important simplification is obtained when 
we take the electroneutral limit. In this case, the electrostatic 
potential becomes a Lagrange multiplier that helps to 
enforce the electroneutrality condition. 

In Section \ref{animal}, we develop a computational scheme 
to simulate the limiting system obtained in the electroneutral limit, 
when the geometry of the cell is assumed spherical. 
As an application, we treat animal cell volume control.

\section{Diffusion of Non-electrolytes and Osmotic Water Flow}
\label{diffosm}
\subsection{Model Formulation}
Consider a bounded domain $\Omega\subset \mathbb{R}^3$ 
and a smooth closed surface $\Gamma\subset \Omega$.
This closed surface divides $\Omega$ into two domains.
Let $\Omega_i\subset \Omega$ be the region bounded by 
$\Gamma$, and let $\Omega_e=\Omega\backslash(\Omega_i\cup \Gamma)$. 
In the context of cell biology, $\Omega_i$ may be identified 
with the intracellular space and $\Omega_e$ the extracellular space.
Although cell physiological systems of biological cells 
serve as our primary motivation
for formulating the models of this paper, 
this identification is not necessary.


In this section we formulate a system of PDEs
that governs the diffusion of {\em non}-electrolytes and osmotic 
flow of water in the presence of membranes.
In Section \ref{elecdiff}, we shall build upon this model 
to treat the electrolyte case.  

We consider $N$ non-electrolyte chemical species
whose concentrations we call $c_k, k=1,\cdots, N$. 
Let $\omega$ be the entropic part of the free energy per unit 
volume of this solution. The following 
expression for $\omega$ is the most standard choice:
\begin{equation}
\omega_0=\sum_{k=1}^N k_BTc_k\ln c_k.\label{ent}
\end{equation}
This expression is valid when the ionic solution is sufficiently 
dilute and lead to linear diffusion of solute. 
Our calculations, however, do not depend on this choice of $\omega$. 
If the solution in question deviates significantly from 
ideality, other expressions for $\omega$ may be used in place of $\omega_0$.

Given $\omega$, the chemical potential $\mu_k$ 
of the $k-$th chemical species is given as:
\begin{equation}
\mu_k=\sigma_k, \; \sigma_k\equiv \PD{\omega}{c_k}\label{mukc}.
\end{equation}
We have introduced two symbols $\mu_k$ and $\sigma_k$ in anticipation 
of the discussion of the electrolyte case, where $\mu_k$ and $\sigma_k$
are different.
For water, it is convenient to consider the water potential $\psi_w$, 
the free energy per unit volume, 
rather than the $\mu_w$, the chemical potential (free 
energy per molecule). The water potential and water chemical potential 
are thus related by the relation $v_w\psi_w=\mu_w$ where $v_w$ is the 
volume of water per molecule. We define:
\begin{equation}
\psi_w\equiv \pi_w+p, \; \pi_w=\paren{\omega-\sum_{k=1}^N c_k\sigma_k}
=\paren{\omega-\sum_{k=1}^N c_k\PD{\omega}{c_k}}
\label{muw}
\end{equation}
where $p$ is the pressure.
As we shall see, $p$ will be determined in our model 
via the equations of fluid flow (Eq. \eqref{stokes}). 
The entropic part of $\psi_w$ (or the {\em osmotic pressure}), 
$\pi_w$, is given 
as the {\em negative} multiple
of the (semi)Legendre transform of $\omega$ 
with respect to all the ionic concentrations $c_k$.
This expression for osmotic pressure can be 
found, for example, in \cite{doi1996introduction}.
As we shall see in the proof of Theorem \ref{mainc}, 
the above definition of $\pi_w$ is forced upon us if we insist that our 
model satisfy a free energy dissipation principle. 
 
We begin by writing down the equations of ionic concentration dynamics.
At any point in $\Omega_i$ or $\Omega_e$
\begin{equation}
\PD{c_k}{t}+\nabla\cdot(\mb{u} c_k)=\nabla \cdot \paren{c_k\frac{D_k}{k_BT}\nabla \mu_k}\label{ckeq}
\end{equation}
where $D_k$ is the diffusion coefficient and $\mb{u}$ is the fluid 
velocity field. Ions thus diffuse down the chemical potential 
gradient and are advected with the local fluid velocity.
We have assumed here that cross-diffusion (concentration gradient of 
one species driving the diffusion of another species) is negligible. 

We must supplement these equations with boundary conditions.
Most formulations of non-equilibrium thermodynamic processes seem to 
be confined either to the bulk or to the interface between 
two bulk phases \cite{katzir1965nonequilibrium,degroot1962non,kjelstrup2008non}. 
Here we must couple the equations in the bulk and with 
boundary conditions at the interface, which as a whole 
give us a consistent thermodynamic treatment of diffusion and osmosis.

On the outer boundary $\Gamma_\text{out}=\partial \Omega$, for simplicity, 
we impose no-flux boundary conditions. 
Let us now consider the interfacial boundary conditions on the 
membrane $\Gamma$. 
Since we want to account for osmotic water flow, the membrane $\Gamma$
will deform in time.
Sometimes, we shall use the notation $\Gamma_t$ to make this time dependence
explicit.
Let $\Gamma_\text{ref}$ be the resting or reference configuration of 
$\Gamma$. The membrane will then be a smooth deformation of 
this reference surface.
We may take some (local) coordinate system $\bm{\theta}$ 
on $\Gamma_\text{ref}$, which would serve as a material coordinate 
for $\Gamma_t$.
The trajectory of a point that corresponds to 
$\bm{\theta}=\bm{\theta}_0$
is given by $\mb{X}(\bm{\theta}_0,t)\in \mathbb{R}^3$. For fixed $t$,
$\mb{X}(\cdot,t)$ gives us the shape of the membrane $\Gamma_t$.

Consider a point $\mb{x}=\mb{X}(\bm{\theta},t)$ on the membrane.
Let $\mb{n}$ be the outward unit normal on $\Gamma$ at this point.
The boundary conditions satisfied on the intracellular and extracellular 
faces of the membrane are given by:
\begin{equation}
c_k\paren{\mb{u}-\frac{D_k}{k_BT}\nabla \mu_k}\cdot \mb{n}=c_k\PD{\mb{X}}{t}\cdot \mb{n}+j_k+a_k
\text{ on } \Gamma_i \text{ or } \Gamma_e.\label{ckbc}
\end{equation}
The expression ``on $\Gamma_{i,e}$'' indicates that 
the quantities are to be evaluated on the intracellular 
and extracellular faces of $\Gamma$ respectively.
The term $j_k$ is the passive chemical flux that passes through the membrane
and $a_k$ is the active flux. Fluxes going from 
$\Omega_i$ to $\Omega_e$ is taken to be positive. 
Equation \eqref{ckbc} is just a statement of 
conservation of ions at the moving membrane.
It is easy to check that \eqref{ckeq}
together with \eqref{ckbc} implies conservation of each species.  

The flux $j_k$ is in general a function of
concentrations of all chemical species 
on both sides of the membrane. The chemicals 
are usually carried by channels and transporters, 
and the functional form of $j_k$ describe the 
kinetic features of these carriers.  
As we shall see in Section \ref{cross}, $j_k$ can also be a function 
of the difference in water chemical potential across the membrane. 
 
The passive nature of the $j_k$ is expressed by the following 
inequality:
\begin{equation}\label{jkcond}
[\mu_k]j_k \geq 0, \;
[\mu_k]=\at{\mu_k}{\Gamma_i}-\at{\mu_k}{\Gamma_e}
\end{equation}
where $\at{\cdot}{\Gamma_{i,e}}$ expresses evaluation of quantities 
at the intracellular and extracellular faces of the membrane $\Gamma$
respectively. We shall see that this condition is consistent with 
the free energy identity \eqref{FE}.
For any quantity $w$,  
$[w]=\at{w}{\Gamma_i}-\at{w}{\Gamma_e}$ 
will henceforth always denote the difference between $w$ 
evaluated on the intracellular 
and extracellular faces of $\Gamma$.
A simple example of $j_k$ occurs when $j_k$ is a function only of 
$[\mu_k]$ and satisfies the following conditions:
\begin{equation}\label{monotone}
j_k=j_k([\mu_k]), \; j_k(0)=0, \; \PD{j_k}{[\mu_k]}\geq 0.
\end{equation}
It is easily checked that $j_k$ in this case satisfies \eqref{jkcond}.
We shall see concrete examples in Section \ref{elecdiff}.
Condition \eqref{jkcond} is somewhat restrictive in the 
sense that it is possible to have a ``passive'' flux that does not 
satisfy \eqref{jkcond} when multiple species flow and interact. 
We shall discuss this briefly in Section \ref{cross}.
Condition \eqref{jkcond} needs to be relaxed to describe systems 
in which different chemical species flow through one channel 
or (passive) transporter. Those systems usually couple fluxes 
of different chemical species. They often couple (unidirectional) 
influx and efflux of the same species (symporters and antiporters)
\cite{hille1982transport,Hille,tosteson1989membrane,boron2008medical,davson1970textbook}.  
The active flux $a_k$ is typically due to ionic pump currents
often driven by ATP 
\cite{tosteson1989membrane,boron2008medical,davson1970textbook}.

We now discuss force balance. 
We shall treat the cytosol as a viscous fluid 
and the cell membrane as an elastic surface.
The cell membrane itself is just a lipid bilayer, and 
cannot support a large mechanical load.
The cell membrane is often mechanically reinforced by 
an underlying actin cortex
and overlying system of connective tissue, and in the case of plant cells, 
by an overlying cell wall. If we view these structures 
as part of the membrane, our treatment 
of the membrane being elastic may be a useful simplification. 
We could employ a more complete model of cell mechanics 
incorporating, in particular, the mechanical properties
of the cytoskeleton and extracellular lamina. 
However, our emphasis here
is on demonstrating how osmosis can be seamlessly combined with 
mechanics, and we intentionally keep the mechanical model 
simple to clarify the underlying ideas. 

Consider the equations of fluid flow. The flow field 
$\mb{u}$ satisfies the Stokes equation at any point 
in $\Omega_i$ or $\Omega_e$:
\begin{equation}
\nu\Delta \mb{u}-\nabla p
=0, \; \nabla \cdot \mb{u}=0 \label{stokes}
\end{equation}
where $\nu$ is the viscosity of the electrolyte solution. Note that 
the above equations can also be written as follows:
\begin{equation}
\nabla \cdot \Sigma_m(\mb{u},p)=0,\; \nabla \cdot \mb{u}=0, \; 
\Sigma_m(\mb{u},p)=\nu(\nabla\mb{u}+(\nabla\mb{u})^T)-pI
\end{equation}
where $I$ is the $3\times 3$ identity matrix and $(\nabla\mb{u})^T$
is the transpose of $\nabla\mb{u}$.
Here, $\Sigma_m$ is the mechanical stress tensor.
It is possible to carry out much of the calculations to follow
even if we retain inertial terms and work with the Navier-Stokes equations 
or use other constitutive relations for the mechanical stress.
In particular, such modifications will not destroy 
the free energy identity to be discussed below.
We do note, however, that incompressibility is important for our computations.

We now turn to boundary conditions. We let $\mb{u}=0$ on the 
outer boundary $\Gamma_\text{out}$ for simplicity. On the cell membrane $\Gamma$, 
we have the following conditions. 
Take a point $\mb{x}=\mb{X}(\bm{\theta},t)$ on the boundary $\Gamma$, 
and let $\mb{n}$ be the unit outward normal on $\Gamma$ at this point.
First, by force balance, we have:
\begin{equation}
[\Sigma_m(\mb{u},p)]=\mb{F}_\text{elas}.\label{forcebalance}
\end{equation}
Here, $\mb{F}_\text{elas}$ is the 
elastic force per unit area of membrane.

We make some assumptions about the form of the elastic force.
We assume that the membrane is a hyperelastic material in the sense that 
the elastic force can be derived from an elastic 
energy functional $E_\text{elas}$ that is a function only 
of the configuration $\mb{X}$:
\begin{equation}
E_\text{elas}(\mb{X})=\int_{\Gamma_\text{ref}} \mathcal{E}(\mb{X})dm_{\Gamma_\text{ref}}
\label{elas}
\end{equation}
where $m_{\Gamma_\text{ref}}$ is the surface 
measure of $\Gamma_\text{ref}$ and 
$\mathcal{E}$ is the elastic energy density measured with respect 
to this measure. It is possible that $\mathcal{E}$ is a function of 
spatial derivatives of $\mb{X}$.
The elastic force $\mb{F}_\text{elas}(\mb{x})$ satisfies the relation:
\begin{equation}
\begin{split}
\at{\D{}{s}}{s=0}\int_{\Gamma_\text{ref}} 
\mathcal{E}(\mb{X}(\bm{\theta})+s\mb{Y}(\bm{\theta}))
dm_{\Gamma_\text{ref}}
&=-\int_{\Gamma} \mb{F}_\text{elas}(\mb{x})\cdot 
\mb{Y}(\mb{X}^{-1}(\mb{x}))dm_\Gamma
\end{split}
\end{equation}
where $\mb{Y}$ is an arbitrary vector field defined on $\Gamma_\text{ref}$
and $m_\Gamma$ is the natural measure on the surface $\Gamma$
and is related to $m_{\Gamma_\text{ref}}$ 
by $dm_\Gamma=Qdm_{\Gamma_\text{ref}}$ where $Q$ is the Jacobian determinant 
relating $\Gamma_t$ to the reference configuration $\Gamma_\text{ref}$. 
The expression $\mb{X}^{-1}(\mb{x})$
is the inverse of the map $\mb{x}=\mb{X}(\bm{\theta})$.
Thus, $\mb{F}_\text{elas}$ is given as 
the variational derivative of the elastic 
energy up to the Jacobian factor $Q$.
Consequently, we have the following relation:
\begin{equation}
\D{}{t}E_\text{elas}(\mb{X})
=-\int_{\Gamma} \mb{F}_\text{elas}\cdot \PD{\mb{X}}{t}dm_\Gamma.\label{dtelas}
\end{equation}
In the above, $\PD{\mb{X}}{t}$ should be thought of as a function of 
$\mb{x}$, i.e., 
$\PD{\mb{X}}{t}=\PD{\mb{X}}{t}(\mb{X}^{-1}(\mb{x}))$. 
We shall henceforth 
abuse notation and let $\PD{\mb{X}}{t}$ be a function of $\mb{x}$
or $\bm{\theta}$ depending on the context of the expression.

In addition to the force balance condition \eqref{forcebalance}, 
we need a continuity 
condition on the interface $\Gamma$. Since we are allowing for 
osmotic water flow, we have a slip between the 
movement of the membrane and the flow field. 
At a point $\mb{x}=\mb{X}(\bm{\theta},t)$ on the boundary $\Gamma$
we have:
\begin{equation}
\mb{u}-\PD{\mb{X}}{t}=j_w\mb{n}\label{cont}
\end{equation}
where $j_w$ is water flux through the membrane. We are thus assuming 
that water flow is always normal to the membrane and 
that there is no slip between the fluid and the membrane
in the direction tangent to the membrane. Given that 
$\mb{n}$ is the outward normal, $j_w$ is positive when 
water is flowing out of the cell. Like $j_k$ in \eqref{ckbc}, 
we let $j_w$ be a passive flux in the following sense.
We let:
\begin{equation}\label{jwexp}
j_w=j_w([\widehat{\psi_w}]),\; 
\at{\widehat{\psi_w}}{\Gamma_{i,e}}=
\at{\pi_w}{\Gamma_{i,e}}
-\at{((\Sigma_m(\mb{u},p))\cdot\mb{n})}{\Gamma_{i,e}}
\end{equation}
where $\pi_w$ was the entropic contribution to the water 
potential defined in \eqref{muw}. In general, 
$j_w$ can be a function of other variables (see Section \ref{cross}). 
The function $j_w$ satisfies the condition, analogous to \eqref{jkcond}:
\begin{equation}\label{jwcond}
[\widehat{\psi_w}]j_w\geq 0.
\end{equation}
This condition is clearly satisfied if:
\begin{equation}\label{jwmonotone}
j_w=j_w([\widehat{\psi_w}]),\; j_w(0)=0,\; \PD{j_w}{[\widehat{\psi_w}]}>0.
\end{equation}
Water flow across the membrane is thus driven by the difference 
in the entropic contribution to the chemical potential as well 
as the jump in the mechanical force across the cell membrane. 
When the flow 
field $\mb{u}$ is equal to $0$, the above expression for $\widehat{\psi_w}$
reduces to:
\begin{equation}
\at{\widehat{\psi_w}}{\Gamma_{i,e}}=
\at{\pi_w}{\Gamma_{i,e}}
+\at{p}{\Gamma_{i,e}}=\at{\psi_w}{\Gamma_{i,e}}.
\end{equation}
We may thus view $\widehat{\psi_w}$ as a modification 
of $\psi_w$ to take into account dynamic flow effects.
Let $\omega=\omega_0$ given in \eqref{ent}
in the definition \eqref{muw} of $\pi_w$.
Then, we have, under zero flow conditions,
\begin{equation}
[\widehat{\psi_w}]=\jump{p-k_BT\sum_{k=1}^N c_k}.\label{muwclassical}
\end{equation}
We thus reproduce the standard statement that water flow across the membrane
is driven by the difference in osmotic and mechanical pressure, 
where the osmotic pressure, $\pi_w$, is given by the van t'Hoff law.

\subsection{Free Energy Identity} 

We now show that the system described above satisfies the 
following free energy identity.

\begin{theorem}\label{mainc}
Suppose $c_k,\mb{u}, p$ be smooth functions 
that satisfy \eqref{ckeq}, \eqref{stokes},
and in $\Omega_i$ and $\Omega_e$ and satisfy boundary 
conditions \eqref{ckbc}, \eqref{forcebalance}, \eqref{cont}
on the membrane $\Gamma$. Suppose further
that $c_k$ and satisfy no-flux boundary conditions and 
$\mb{u}=0$ on the outer boundary $\Gamma_\text{out}$. Then, $c_k,\mb{u},p$
and $\phi$ satisfy the following free energy identity.
\begin{equation}
\begin{split}
\D{}{t}(G_S+E_\text{elas})&=-I_p-J_p-J_a\\
G_S&=\int_{\Omega_i\cup \Omega_e}\omega d\mb{x}\\
I_p&=\int_{\Omega_i\cup \Omega_e}\paren{\nu\abs{\nabla \mb{u}}^2
+\sum_{k=1}^Nc_k\frac{D_k}{k_BT}\abs{\nabla \mu_k}^2}d\mb{x}\\
J_p&=\int_{\Gamma}
\paren{[\widehat{\psi_w}]j_w
+\sum_{k=1}^N[\mu_k]j_k}dm_\Gamma\\
J_a&=\int_{\Gamma}\paren{\sum_{k=1}^N[\mu_k]a_k}dm_\Gamma.
\end{split}\label{FE}
\end{equation}
Here, $E_\text{elas}$ was given in \eqref{elas} and
$\abs{\nabla \mb{u}}$ is the Frobenius norm of the $3\times 3$ rate of 
deformation matrix $\nabla \mb{u}$. 

If $a_k\equiv 0$, then the free energy is monotone decreasing.
\end{theorem}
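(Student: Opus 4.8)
The plan is to verify the identity by differentiating $G_S + E_\text{elas}$ in time and accounting carefully for the fact that the domains $\Omega_i$ and $\Omega_e$ are themselves moving with the membrane. First I would handle $\frac{d}{dt}E_\text{elas}$, which is already given to us by \eqref{dtelas}: it equals $-\int_\Gamma \mb{F}_\text{elas}\cdot \PD{\mb{X}}{t}\,dm_\Gamma$. The bulk of the work is in computing $\frac{d}{dt}G_S = \frac{d}{dt}\int_{\Omega_i\cup\Omega_e}\omega\,d\mb{x}$. Since $\omega$ is a function of the $c_k$ only, $\PD{\omega}{t} = \sum_k \sigma_k \PD{c_k}{t}$ pointwise, and because the membrane moves with normal velocity $\PD{\mb{X}}{t}\cdot\mb{n}$ (but the outer boundary is fixed), the Reynolds-transport/Leibniz rule gives
\begin{equation*}
\D{}{t}G_S = \int_{\Omega_i\cup\Omega_e}\sum_k \sigma_k \PD{c_k}{t}\,d\mb{x}
 - \int_\Gamma [\omega]\,\PD{\mb{X}}{t}\cdot\mb{n}\,dm_\Gamma,
\end{equation*}
where the sign on the surface term reflects that $\mb{n}$ is outward from $\Omega_i$ and the $\Omega_i$-side contributes with the opposite orientation to the $\Omega_e$-side.

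Next I would substitute the PDE \eqref{ckeq} for $\PD{c_k}{t}$, writing $\PD{c_k}{t} = -\nabla\cdot(\mb{u}c_k) + \nabla\cdot(c_k\frac{D_k}{k_BT}\nabla\mu_k)$, and integrate by parts on each subdomain. This produces a bulk term $-\sum_k\int c_k\frac{D_k}{k_BT}\nabla\mu_k\cdot\nabla\sigma_k\,d\mb{x}$ plus $\sum_k\int \mb{u}c_k\cdot\nabla\sigma_k\,d\mb{x}$; since $\mu_k=\sigma_k$ here the first of these is exactly $-\sum_k\int c_k\frac{D_k}{k_BT}|\nabla\mu_k|^2$, the electrodiffusive part of $I_p$. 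The boundary integrals over $\Gamma$ (the outer boundary drops out by the no-flux and $\mb u=0$ conditions) involve $\sigma_k\, c_k(\mb u - \frac{D_k}{k_BT}\nabla\mu_k)\cdot\mb n$, which by the interface condition \eqref{ckbc} equals $\sigma_k(c_k\PD{\mb{X}}{t}\cdot\mb n + j_k + a_k)$; the $c_k\PD{\mb X}{t}\cdot\mb n$ pieces, summed over $k$, partially cancel against the $[\omega]$ surface term via the Legendre-transform identity $\omega - \sum_k c_k\sigma_k = \pi_w$, leaving a clean $[\pi_w]\PD{\mb X}{t}\cdot\mb n$ contribution. The advection terms $\mb u\cdot\nabla\sigma_k$ in the bulk need to be collected with the Stokes-equation contribution.

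For the mechanical side I would test \eqref{stokes} against $\mb u$ and integrate by parts over $\Omega_i\cup\Omega_e$: this yields $-\int \nu|\nabla\mb u|^2\,d\mb x$ (using $\nabla\cdot\mb u=0$) plus a boundary term $\int_\Gamma (\Sigma_m(\mb u,p)\mb n)\cdot[\mb u]$-type expression evaluated as a jump, which by the force-balance condition \eqref{forcebalance} and the slip condition \eqref{cont} reorganizes into $-\int_\Gamma \mb F_\text{elas}\cdot\PD{\mb X}{t}$ (canceling the elastic-energy rate) plus terms involving $j_w$ and $[(\Sigma_m\mb n)]$. Here the key is that the pressure $p$ does no net work in the bulk because of incompressibility, and the remaining surface bookkeeping must reproduce $[\widehat{\psi_w}]j_w$ using the definition \eqref{jwexp}. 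The main obstacle, I expect, is precisely this surface-term accounting: one must combine the $\mb u\cdot\nabla\sigma_k$ advection terms, the osmotic-pressure jump $[\pi_w]\PD{\mb X}{t}\cdot\mb n$, the mechanical-stress jump acting against $j_w\mb n$, and the $[\omega]$ Leibniz term, and check that they assemble exactly into $[\widehat{\psi_w}]j_w = \big([\pi_w] - [(\Sigma_m\mb n)\cdot\mb n]\big)j_w$ with the correct sign — in particular verifying that $\pi_w$ as defined in \eqref{muw} is the unique choice making this work (the remark after \eqref{muw} flags this). Once the identity $\frac{d}{dt}(G_S+E_\text{elas}) = -I_p - J_p - J_a$ is in hand, the final claim is immediate: if $a_k\equiv 0$ then $J_a=0$, $I_p\ge 0$ since it is a sum of squares, and $J_p = \int_\Gamma\big([\widehat{\psi_w}]j_w + \sum_k[\mu_k]j_k\big)\ge 0$ by the passivity conditions \eqref{jkcond} and \eqref{jwcond}, so $\frac{d}{dt}(G_S+E_\text{elas})\le 0$.
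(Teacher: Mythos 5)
Your outline follows essentially the same route as the paper's proof: test the concentration equation with the chemical potential and the Stokes equation with $\mb{u}$, integrate by parts on each subdomain, and assemble the interface terms via \eqref{ckbc}, \eqref{forcebalance} and \eqref{cont} into $[\widehat{\psi_w}]j_w$; the paper merely packages the advective flux and the moving-boundary (Leibniz) contribution together from the start, as $\int_\Gamma[\omega]\paren{\mb{u}-\PD{\mb{X}}{t}}\cdot\mb{n}\,dm_\Gamma=\int_\Gamma[\omega]j_w\,dm_\Gamma$, whereas you keep them separate. One concrete slip to fix: with $\mb{n}$ outward from $\Omega_i$ and the convention $[w]=\at{w}{\Gamma_i}-\at{w}{\Gamma_e}$, the transport theorem gives $\D{}{t}G_S=\int_{\Omega_i\cup\Omega_e}\sum_k\sigma_k\PD{c_k}{t}\,d\mb{x}+\int_\Gamma[\omega]\PD{\mb{X}}{t}\cdot\mb{n}\,dm_\Gamma$ with a \emph{plus} sign, since the $\Omega_e$ face enters with outward normal $-\mb{n}$ and the two faces therefore combine to $+[\omega]$; with your minus sign the $[\omega]$ and $-[c_k\sigma_k]$ surface pieces would not assemble into $[\pi_w]j_w$ and the identity would fail. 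The rest of the bookkeeping, and the final monotonicity argument from \eqref{jkcond} and \eqref{jwcond}, matches the paper.
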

The free energy is given as a sum 
of the entropic contribution $G_S$ and
the membrane elasticity term $E_\text{elas}$.
This free energy is dissipated through bulk currents 
$I_p$ and membrane currents $J_p$. Dissipation in the bulk comes from
ionic electrodiffusion and viscous dissipation. Dissipation at the membrane 
comes from ionic channel currents and transmembrane water flow.  
If active membrane currents are present, they may contribute 
to an increase in the free energy through the term $J_a$.
It is sometimes useful to rewrite $J_p+I_p$ as:
\begin{equation}
\begin{split}\label{Fwc}
J_p+I_p&=F_w+F_c,\\
F_w&=\int_{\Omega_i\cup \Omega_e}\nu\abs{\nabla \mb{u}}^2d\mb{x}
+\int_{\Gamma}[\widehat{\psi_w}]j_wdm_\Gamma,\\
F_c&=\int_{\Omega_i\cup \Omega_e}
\sum_{k=1}^Nc_k\frac{D_k}{k_BT}\abs{\nabla \mu_k}^2d\mb{x}
+\int_{\Gamma}\sum_{k=1}^N[\mu_k]j_kdm_\Gamma,
\end{split}
\end{equation}
where $F_w$ and $F_c$ are the dissipations due to water flow 
and solute diffusion respectively. 
In the statement of the Theorem, it is important that
\eqref{jkcond} and \eqref{jwcond} are used only to conclude 
that $J_p$ be positive. Identity \eqref{FE} should be seen 
as giving us the definition of what a passive current should be.

We now prove Theorem \ref{main}.
An interesting point about 
the calculation to follow is how dissipation through 
transmembrane water flow comes from two different sources, equations for 
ionic concentration dynamics and the fluid equations.
The former contributes the osmotic term $\pi_w$ term,
and the latter contributes the mechanical term $p$, which together 
add up to the water potential $\psi_w$. 

\begin{proof}[Proof of Theorem \ref{mainc}]
First, multiply \eqref{ckeq} with $\mu_k$ in \eqref{muk} and integrate 
over $\Omega_i$ and sum in $k$:
\begin{equation}
\sum_{k=1}^N\int_{\Omega_i}\mu_k\paren{\PD{c_k}{t}+\nabla\cdot (\mb{u} c_k)}d\mb{x}
=\sum_{k=1}^N\int_{\Omega_i}\mu_k \nabla \cdot \paren{c_k\frac{D_k}{k_BT}\nabla \mu_k}d\mb{x}.\label{mkck}
\end{equation}
The summand in the right hand side becomes:
\begin{equation}
\begin{split}
\int_{\Omega_i}\mu_k \nabla \cdot \paren{c_k\frac{D_k}{k_BT}\nabla \mu_k}d\mb{x}
&=\int_{\Gamma_i}\paren{\mu_k {c_k\frac{D_k}{k_BT}\nabla \mu_k}\cdot \mb{n}}dm_\Gamma\\
&-\int_{\Omega_i}\paren{c_k\frac{D_k}{k_BT}\abs{\nabla \mu_k}^2}d\mb{x}
\end{split}
\end{equation}
where $\mb{n}$ is the outward normal on $\Gamma$. Consider the 
left hand side of \eqref{mkck}.
\begin{equation}
\sum_{k=1}^N\mu_k\paren{\PD{c_k}{t}+\nabla \cdot (\mb{u} c_k)}
=\sum_{k=1}^N\PD{\omega}{c_k}\paren{\PD{c_k}{t}+\mb{u}\cdot \nabla c_k}
=\PD{\omega}{t}+\nabla \cdot (\mb{u}\omega),\label{mkcksum}
\end{equation}
where we used \eqref{mukc} and the incompressibility condition 
in \eqref{stokes}.
Integrating the above over $\Omega_i$, we have:
\begin{equation}
\begin{split}
\int_{\Omega_i} 
\paren{\PD{\omega}{t}+\nabla \cdot (\mb{u}\omega)}d\mb{x}
&=\int_{\Omega_i} \PD{\omega}{t}d\mb{x}
+\int_{\Gamma_i} \omega\mb{u}\cdot\mb{n}dm_\Gamma\\
&=\D{}{t}\int_{\Omega_i}\omega d\mb{x}
+\int_{\Gamma_i}\omega\paren{\mb{u}-\PD{\mb{X}}{t}}\cdot\mb{n}dm_\Gamma
\end{split}
\end{equation}
where we used the fact that 
$\mb{u}$ is divergence free in the first equality.
The term involving $\PD{\mb{X}}{t}$ comes from the fact that the 
membrane $\Gamma$ is moving in time.
Performing similar calculations on $\Omega_e$, and adding this to 
the above, we find:
\begin{equation}\label{domega}
\begin{split}
&\D{}{t}\int_{\Omega_i\cup \Omega_e}\omega d\mb{x}
+\int_{\Gamma}[\omega]j_w dm_\Gamma\\
=&\sum_{k=1}^N
\int_{\Gamma}\jump{\mu_k {c_k\frac{D_k}{k_BT}\nabla \mu_k}\cdot \mb{n}}dm_\Gamma
-\sum_{k=1}^N
\int_{\Omega_i\cup \Omega_e}\paren{c_k\frac{D_k}{k_BT}\abs{\nabla \mu_k}^2}d\mb{x}.
\end{split}
\end{equation}
where we used \eqref{cont}. 
Using \eqref{ckbc} and \eqref{cont}, 
we may rewrite the second boundary integral as follows:
\begin{equation}\label{concbndry}
\int_{\Gamma}\jump{\mu_k {c_k\frac{D_k}{k_BT}\nabla \mu_k}
\cdot \mb{n}}dm_\Gamma
=\int_{\Gamma}\paren{\jump{\mu_k c_k}j_w
-[\mu_k](j_k+a_k)}dm_\Gamma.
\end{equation}
We now turn to equation \eqref{stokes}. Multiply this by $\mb{u}$
and integrate over $\Omega_i$:
\begin{equation}
\int_{\Omega_i} \mb{u}\cdot(\nu \Delta \mb{u} -\nabla p)d\mb{x}
=\int_{\Gamma_i}\paren{\Sigma_m(\mb{u},p)\mb{n}}\cdot\mb{u}dm_\Gamma-
\int_{\Omega_i}\mb{\nu}\abs{\nabla \mb{u}}^2d\mb{x}=0
\end{equation}
Performing a similar calculation on $\Omega_e$ and adding this to the above, 
we have:
\begin{equation}
\int_{\Gamma}\jump{\paren{\Sigma_m(\mb{u},p)\mb{n}}}\cdot \mb{u}dm_\Gamma-
\int_{\Omega_i\cup \Omega_e}\mb{\nu}\abs{\nabla \mb{u}}^2d\mb{x}
=0
\end{equation}
We may use \eqref{forcebalance}, \eqref{dtelas} and \eqref{cont} to find
\begin{equation}\label{elasen}
\D{}{t} E_\text{elas}(\mb{X})=
\int_{\Gamma}\jump{\paren{\Sigma_m(\mb{u},p)\mb{n}}\cdot \mb{n}}j_wdm_\Gamma
-\int_{\Omega_i\cup \Omega_e}\mb{\nu}\abs{\nabla \mb{u}}^2d\mb{x}
\end{equation}
Combining \eqref{domega}, \eqref{concbndry} and \eqref{elasen}, 
we have:
\begin{equation}
\begin{split}
&\D{}{t}\paren{\int_{\Omega_i\cup \Omega_e}\omega d\mb{x}+E_\text{elas}(\mb{X})}\\
=&-\sum_{k=1}^N\int_{\Omega_i\cup \Omega_e}\paren{c_k\frac{D_k}{k_BT}\abs{\nabla \mu_k}^2}d\mb{x}-\int_{\Omega_i\cup \Omega_e}\mb{\nu}\abs{\nabla \mb{u}}^2d\mb{x}\\
&-\int_{\Gamma}[\mu_k](j_k+a_k)dm_\Gamma
-\int_{\Gamma}\jump{\omega-c_k\sigma_k-(\Sigma_m(\mb{u},p)\mb{n})\cdot\mb{n}}j_wdm_\Gamma
\end{split}
\end{equation}
Recalling the definition of $\widehat{\psi_w}$ in \eqref{jwexp}, we obtain 
the desired equality.
In the absence of active currents $a_k$, is decreasing 
given that $j_k$ and $j_w$ satisfy conditions 
\eqref{jkcond} and \eqref{jwcond} respectively.
\end{proof}

In the last line of the above proof, note that the expression 
for $\widehat{\psi_w}$ arises 
naturally as a result of integrating by parts. In this sense, 
we may say that osmotic water flow arises as a natural consequence 
of requiring that the free energy be decreasing in time.  

\subsection{Cross Coefficients and Solvent Drag}\label{cross}

As can be seen from \eqref{FE} or \eqref{FEE}
the only condition we need to impose for the free energy 
to decrease with time in the absence of active currents is the 
following:
\begin{equation}\label{jkjw}
[\widehat{\psi_w}]j_w+\sum_{k=1}^N [\mu_k]j_k\geq 0.
\end{equation}
This condition is weaker than conditions \eqref{jkcond} and \eqref{jwcond}
being satisfied separately by $j_k$ and $j_w$. We now discuss an 
important case in which $j_k$ and $j_w$ may not individually satisfy 
\eqref{jkcond} and \eqref{jwcond} but \eqref{jkjw} is satisfied nevertheless.
This arises whenever fluxes are coupled as is usually 
the case for fluxes through transporters or (single filing) channels
\cite{hille1982transport,Hille,tosteson1989membrane,boron2008medical,
davson1970textbook}. We note that such cross-diffusion can be 
relevant even in bulk solution
\cite{tyrrell1971diffusion,justice1983conductance,hoheisel1993theoretical,
taylor1993multicomponent,accascina1959electrolytic}.

If $\jump{\mu_k}$ and $[\widehat{\psi_w}]$ remain 
small, the dissipation $J_p$ in \eqref{FE} may be approximated 
by a quadratic form in the jumps:
\begin{equation}\label{quad}
\begin{split}
J_p&=\int_\Gamma \jump{\bm{\mu}}\cdot\mb{j}dm_\Gamma=\int_\Gamma \jump{\bm{\mu}}\cdot(\mathcal{L}\jump{\bm{\mu}})dm_\Gamma,\\
\bm{\mu}&=(\mu_1,\cdots,\mu_N,\widehat{\psi_w})^T, 
\mb{j}=(j_1,\cdots,j_N,j_w)^T,
\end{split}
\end{equation}
where $\mathcal{L}$ is a symmetric $(N+1)\times(N+1)$ matrix.
Requiring that the free energy be decreasing implies that $\mathcal{L}$
must be positive definite. 
The {\em maximum dissipation principle} requires that $\mb{j}$ be given as 
variational derivatives of $J_p/2$ with respect to $[\bm{\mu}]$:
\begin{equation}
\mb{j}=\mathcal{L}\jump{\bm{\mu}}.
\end{equation}
Note that, without the maximum dissipation principle, 
\eqref{quad} only implies 
$\mb{j}=(\mathcal{L}+\tilde{\mathcal{L}})\jump{\bm{\mu}}$ 
where $\tilde{\mathcal{L}}$
is an arbitrary skew symmetric matrix.
The symmetry of the coefficient matrix $\mathcal{L}$ relating $[\bm{\mu}]$
and $\mb{j}$ is an instance of the Onsager reciprocity 
relation \cite{degroot1962non,katzir1965nonequilibrium,kjelstrup2008non}. 

A lipid bilayer membrane is impermeable to many solutes, but only 
approximately so. In this case, a water flux
may induce a solute flux, and this may be expressed as 
$\mathcal{L}_{kw}\neq 0$ where $\mathcal{L}_{kw}$ is the $(k,N+1)$ entry 
of the matrix $\mathcal{L}$. This is known as solvent drag.
Given the presence of such cross coefficients, \eqref{jkcond}
and \eqref{jwcond} are not necessary true, 
whereas condition \eqref{jkjw} is true by construction.

\section{Electrodiffusion of Ions and Osmotic Water Flow}\label{elecdiff}

\subsection{Model Formulation}
Let us now consider the case in which the chemical species 
are electrically charged.  
As in the previous section, we let $c_k, k=1,\cdots, N$
be the concentrations of the ionic species. 
Given $\omega$, the entropic part of the free energy per unit volume, 
the chemical potential $\mu_k$ 
of the $k-$th species of ion is given as:
\begin{equation}
\mu_k=\PD{\omega}{c_k}+qz_k\phi= \sigma_k+qz_k\phi\label{muk}.
\end{equation}
The chemical potential is thus a sum of the entropic term $\sigma_k$ and 
the electrostatic term. 
In the electrostatic term, $q$ is the elementary charge, $z_k$ is the valence
of the $k$-th species of ion, and $\phi$ is the electrostatic potential.
The definition of the water potential, $\psi_w$ and 
$\widehat{\psi_w}$, remain the same. 
The ionic concentrations $c_k$ satisfy \eqref{ckeq} 
and \eqref{ckbc} except that 
we now use \eqref{muk} as our expression for the chemical potential.
Ions are thus subject to drift by the electric field in addition 
to diffusion and advection by the local flow field.

If the electrolyte solution is sufficiently dilute, the chemical 
potential $\mu_k$ is given by \eqref{muk} with $\omega$ 
equal to \eqref{ent}. However, 
deviations from ideality can be significant in
electrolyte solutions, especially in higher concentrations
\cite{fawcett2004liquids,lee2008molecular,kunz2010specific,fraenkel2010simplified,
eisenberg2010crowded}.
Cross-diffusion (or flux coupling)
in the bulk can also be significant in electrolyte solutions 
\cite{tyrrell1971diffusion,justice1983conductance,hoheisel1993theoretical,taylor1993multicomponent,accascina1959electrolytic}.
These effects are clearly important in describing the {\em molecular} physiology
of ion channel pores and enzyme active sites at which ionic concentrations 
can reach tens of molars \cite{eisenberg2010crowded,zhang2010molecular}.
The question of whether these effects are significant in formulating 
phenomenological models in 
{\em cellular} physiology, where the typical ionic concentrations 
are two orders of magnitude lower, is largely unexplored.  
This exploration is beyond the scope of the 
present paper, but we point out that our formalism allows the  
incorporation of such effects \cite{eisenberg2010energy}. 

The transmembrane flux $j_k$ is now a function of the 
membrane potential $[\phi]$ in addition to the dependencies 
discussed in the previous section. 
We require that $j_k$ satisfy condition \eqref{jkcond}.

The electrostatic potential $\phi$ 
satisfies the Poisson equation:
\begin{equation}
-\nabla \cdot (\epsilon\nabla\phi)=\sum_{k=1}^N qz_kc_k\label{poisson}
\end{equation}
where $\epsilon$ is the 
dielectric constant. We shall assume that $\epsilon$ is constant 
in space and time. This restriction may be lifted, at the expense
of introducing a relation that describes the evolution of $\epsilon$.
We also assume that there is no fixed background charge. 
It is easy to generalize the calculations below to the 
case when the immobile charges, if present, always stay away 
from the moving membrane. Otherwise, one would need to introduce 
``collision rules'' to determine what happens when the membrane 
hits the immobile charges.
We impose Neumann boundary conditions for \eqref{poisson} on 
the outer boundary $\Gamma_\text{out}$ for simplicity. On the membrane 
$\Gamma$, we impose the following boundary condition:
\begin{equation}
-\at{\epsilon\PD{\phi}{\mb{n}}}{\Gamma_i}
=-\at{\epsilon\PD{\phi}{\mb{n}}}{\Gamma_e}=C_m[\phi].\label{cap}
\end{equation}
where $C_m$ is the capacitance per unit area of membrane. 
The above is simply a statement about the continuity 
of the electric flux density.
Since the membrane is moving, the capacitance $C_m$ is itself 
an evolving quantity. 
We assume the following family of constitutive laws for $C_m$. 
At $\mb{x}=\mb{X}(\bm{\theta},t)$,
\begin{equation}
C_m(\mb{x})=C_m(Q(\mb{X}))\label{CmQ}
\end{equation}
where $Q(\mb{X})$ is the Jacobian or metric determinant of 
the configuration $\Gamma_t$ at time $t$
with respect to the reference configuration $\Gamma_\text{ref}$.
This factor describes the extent to which the membrane 
is stretched from the rest configuration. 
A simple example of \eqref{CmQ} would be:
\begin{equation}
C_m(\mb{x})=C_m^0=\text{const}.\label{Cm0}
\end{equation}
As another example, we may set:
\begin{equation}
C_m(\mb{x})=C_m^0 Q(\mb{X})\label{Cmscaling}
\end{equation}
where $C_m^0=\text{const}$ 
is the capacitance per unit area measured in the 
reference configuration. 
Relation \eqref{Cmscaling} is the natural scaling if we assume 
that the membrane is made of an incompressible material. 
For suppose the membrane is made of a material whose dielectric 
constant is $\epsilon_m$. If the thickness of the membrane at 
the point $\mb{x}=\mb{X}(\bm{\theta},t)$ is $d(\mb{x})$, 
the membrane capacitance there 
is given by $\epsilon_m/d(\mb{x})$. The incompressibility 
of the material implies that the local membrane volume 
remains constant in time: 
$d(\mb{x})Q(\mb{X})=\text{const}$. Thus, 
$C_m(\mb{x})$ must be proportional to $Q(\mb{X})$.

Force balance must be modified to take into account 
electrostatic forces.The flow field 
$\mb{u}$ satisfies the Stokes equation 
in $\Omega_i$ or $\Omega_e$ with an electrostatic force term:
\begin{equation}
\nu\Delta \mb{u}-\nabla p-\paren{\sum_{k=1}^Nqz_kc_k}\nabla \phi
=0, \; \nabla \cdot \mb{u}=0 \label{stokesE}
\end{equation}
Note that the above equations can also be written as follows:
\begin{equation}
\begin{split}
\nabla \cdot (\Sigma_m(\mb{u},p)+\Sigma_e(\phi))&=0,\; \nabla \cdot \mb{u}=0,\\
\Sigma_m(\mb{u},p)&=\nu(\nabla{u}+(\nabla{u})^T)-pI,\\
\Sigma_e(\phi)&= 
\epsilon\paren{\nabla \phi\otimes \nabla \phi-\frac{1}{2}\abs{\nabla \phi}^2I}.
\end{split}
\end{equation}
Here, $\Sigma_e$ is the 
Maxwell stress tensor generated by the electric field. 
Note that we have used \eqref{poisson} to rewrite the electrostatic force in 
\eqref{stokesE} in terms of $\Sigma_e$.

We now turn to boundary conditions. We continue to let $\mb{u}=0$ on the 
outer boundary $\Gamma_\text{out}$. On the cell membrane $\Gamma$, 
we have the following conditions. 
First, by force balance, we have:
\begin{equation}
[(\Sigma_m(\mb{u},p)+\Sigma_e(\phi))\mb{n}]
=\mb{F}_\text{elas}+\mb{F}_\text{cap}\label{forcebalanceE}
\end{equation}
In addition to $\mb{F}_\text{elas}$, we have an additional term
$\mb{F}_\text{cap}$ which arises because the membrane 
carries capacitive energy. We shall call this the capacitive force, 
which is given as:
\begin{equation}
\mb{F}_\text{cap}=\tau_\text{cap}\kappa_\Gamma\mb{n}
-\nabla_\Gamma \tau_\text{cap}, \; \tau_\text{cap}
=\frac{1}{2}\paren{C_m+Q\PD{C_m}{Q}}[\phi]^2
\label{FcapCmQ}
\end{equation}
where $\kappa_\Gamma$ is the sum of the principal curvatures of the 
membrane $\Gamma$
and $\nabla_\Gamma=\nabla-\mb{n}(\mb{n}\cdot \nabla)$ is the surface 
gradient on $\Gamma$.
The above expression shows that the capacitive force can 
be seen as a surface tension of strength $-\tau_\text{cap}$.
The above capacitive force is chosen so that Theorem \ref{main} holds,
and in this sense, the proof of Theorem \ref{main} provides a variational 
interpretation of this force. In Appendix \ref{capforce}, we give a 
physical interpretation of expression \eqref{FcapCmQ}.  

An interesting variant of \eqref{forcebalanceE} is the following. 
Suppose the membrane is incompressible in the sense that $Q\equiv 1$
for all time. We note that this condition of two-dimensional 
incompressibility is {\em not} the same 
as assuming that the membrane is made of a (three-dimensional) 
incompressible material. In the case of three-dimensional incompressibility, 
the membrane may stretch, 
but this would lead to a thinning of the membrane, leading 
to the constitutive law \eqref{Cmscaling} as we saw earlier.
When $Q\equiv 1$ for all time, we let:
\begin{equation}
[(\Sigma_m(\mb{u},p)+\Sigma_e(\phi))\mb{n}]
=\mb{F}_\text{elas}+\mb{F}_\text{p}\label{forcebalancep}
\end{equation}
where $\mb{F}_p$ is given as:
\begin{equation}\label{Fp}
\mb{F}_\text{p}=\lambda \kappa_\Gamma\mb{n}-\nabla_\Gamma \lambda.
\end{equation}
The above is a surface pressure and 
$\lambda$ is determined so that $Q\equiv 1$.
Note that in \eqref{forcebalancep} we do not need a capacitive 
force since it can be absorbed into the surface pressure term.

The continuity condition \eqref{cont} remains the same. 
We continue to require that the passive flux $j_k$ satisfy 
\eqref{jkcond}. An important difference, however, is that 
$j_k$ now depends strongly on the membrane potential $[\phi]$,
given that $\mu_k$ depends on $\phi$. Ions usually flow 
through ionic channels, which often have open and closed 
states. The passive flux $j_k$ may also depend on such 
states, which are described by gating variables. 
This flux is the subject of the large experimental 
and theoretical work on ion channels \cite{Hille}. 
For the present work, 
it is appropriate to use the classical phenomenological treatments of flux,
although their molecular underpinnings are not clear \cite{chen1997permeation,
eisenberg1999structure,gillespie2002physical}. 
Some of popular choices $j_k$ include \cite{KS,Hille,HH}:
\begin{align}
j_k^\text{HH}&=g_k[\mu_k]=g_k\paren{z_k[\phi]
+\ln\paren{\frac{\at{c_k}{\Gamma_i}}{\at{c_k}{\Gamma_e}}}},\label{HH}\\
j_k^\text{GHK}&=P_kz_k\phi'\paren{
\frac{\at{c_k}{\Gamma_i}\exp(z_k\phi')-\at{c_k}{\Gamma_e}}
{\exp(z_k\phi')-1}}, \; \phi'=\frac{q\jump{\phi}}{k_BT},\label{GHK}
\end{align}
where $g_k$ and $P_k$ are positive and depend 
on the gating variables in certain modeling contexts. 
It is easily seen that both $j_k^\text{HH}$ and $j_k^\text{GHK}$
satisfy \eqref{jkcond}. We shall use expression \eqref{GHK}
in our numerical computations in Section \ref{animal}.

We remark that the model we just proposed is nothing other 
than the Poisson-Nernst-Planck-Stokes system if we let $\omega=\omega_0$
given in \eqref{ent} 
\cite{Rubinstein}. The novelty here 
is in the interface conditions at the membrane, \eqref{ckbc},
\eqref{cap}, \eqref{forcebalanceE} and \eqref{cont}.
The Poisson Nernst-Planck system has received much attention 
in the field of semiconductors \cite{roosbroeck_theory_1950,Jerome_semiconductor,
selberherr1984analysis}, ionic channels \cite{eisenberg1996computing},
ion exchange membranes and desalination \cite{Rubinstein} as well as 
physical chemistry \cite{bazant2004diffuse}. 

\subsection{Free Energy Identity}

We now show that the system described in the previous section
possesses a natural free energy.

\begin{theorem}\label{main}
Suppose $c_k,\mb{u},p$ and $\phi$ are smooth functions 
that satisfy \eqref{ckeq}, \eqref{stokesE},
and \eqref{poisson} in $\Omega_i$ and $\Omega_e$ and satisfy boundary 
conditions \eqref{ckbc}, \eqref{forcebalanceE}, \eqref{cont}
and \eqref{cap} on the membrane $\Gamma$. Suppose further
that $c_k$ and $\phi$ satisfy no-flux boundary conditions and 
$\mb{u}=0$ on the outer boundary $\Gamma_\text{out}$. Then, $c_k,\mb{u},p$
and $\phi$ satisfy the following free energy identity.
\begin{equation}
\begin{split}
\D{}{t}(G_S+E_\text{elas}+E_\text{elec})&=-I_p-J_p-J_a\\
E_\text{elec}&=\int_{\Omega_e\cup \Omega_e}\frac{1}{2}\epsilon \abs{\nabla{\phi}}^2d\mb{x}
+\int_{\Gamma}\frac{1}{2}C_m[\phi]^2dm_\Gamma
\end{split}\label{FEE}
\end{equation}
Here, $G_S, E_\text{elas}, I_p, J_p, I_a$ are the same as in \eqref{FE}.
The same identity holds if we require $Q\equiv 1$ and 
adopt \eqref{forcebalancep} instead of \eqref{forcebalanceE}.

If $a_k\equiv 0$, the free energy is monotone decreasing.
\end{theorem}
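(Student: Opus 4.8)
The plan is to mimic the structure of the proof of Theorem \ref{mainc}, adding the electrostatic contributions at each stage and tracking the new surface terms coming from the capacitance condition \eqref{cap} and the Maxwell stress in \eqref{forcebalanceE}. Concretely, I would carry out three parallel computations — one for the ionic concentration equations, one for the Stokes/force-balance system, and one for the Poisson equation — and then add them.

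First I would repeat the calculation leading to \eqref{domega}: multiply \eqref{ckeq} by $\mu_k$ (now given by \eqref{muk}), integrate over $\Omega_i\cup\Omega_e$, sum in $k$, integrate by parts, and use incompressibility together with the identity $\sum_k \sigma_k(\partial_t c_k + \mb{u}\cdot\nabla c_k) = \partial_t\omega + \nabla\cdot(\mb{u}\omega)$. The only change from Section \ref{diffosm} is that the left-hand side now picks up the extra electrostatic piece $\sum_k qz_k\phi(\partial_t c_k + \nabla\cdot(\mb{u}c_k))$, which I would \emph{not} absorb into $\partial_t\omega$; I would instead leave it and recognize, via \eqref{poisson}, that $\sum_k qz_kc_k = -\nabla\cdot(\epsilon\nabla\phi)$, so this term couples directly to the electrostatic energy $E_\text{elec}$. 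Second, I would multiply \eqref{stokesE} by $\mb{u}$ and integrate by parts, which (compared to the proof of Theorem \ref{mainc}) produces the additional bulk term $-\int (\sum_k qz_kc_k)\nabla\phi\cdot\mb{u}$ and, on the boundary, the jump of the \emph{total} stress $\Sigma_m+\Sigma_e$ rather than $\Sigma_m$ alone; applying \eqref{forcebalanceE} and \eqref{dtelas} turns the $\mb{F}_\text{elas}$ part into $-\tfrac{d}{dt}E_\text{elas}$ and leaves the $\mb{F}_\text{cap}$ part and the Maxwell-stress jump as surface integrals to be dealt with. Third, I would differentiate $E_\text{elec}$ in time directly: $\frac{d}{dt}\int\tfrac12\epsilon|\nabla\phi|^2$ produces $\int \epsilon\nabla\phi\cdot\nabla\partial_t\phi$ plus a boundary contribution from the moving domain, and $\frac{d}{dt}\int_\Gamma\tfrac12 C_m[\phi]^2$ produces a term involving $\partial_t[\phi]$ and a term involving $\partial_t C_m = (\partial C_m/\partial Q)\,\partial_t Q$, the latter of which is exactly what the $\tau_\text{cap}$ in \eqref{FcapCmQ} is designed to cancel.

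The main obstacle, as the authors themselves flag, is the bookkeeping of the surface terms: after adding the three computations, one must show that the several separate boundary integrals — the Maxwell-stress jump $[\Sigma_e\mb{n}]\cdot(\text{something})$, the $\mb{F}_\text{cap}$ term, the $\partial_t C_m$ term, the electrostatic boundary term from $\int\epsilon\nabla\phi\cdot\nabla\partial_t\phi$ after integrating by parts and invoking \eqref{cap}, and the advective boundary term $[\cdots]\,j_w$ from the moving interface — all combine, using the capacitance condition \eqref{cap} (which gives $-\epsilon\partial\phi/\partial\mb{n}|_{\Gamma_{i,e}} = C_m[\phi]$) and the continuity condition \eqref{cont} ($\mb{u}-\partial_t\mb{X}=j_w\mb{n}$), into precisely the $[\widehat{\psi_w}]j_w$ and $[\mu_k]j_k$ terms of $J_p$ with no leftover. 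The key technical points are: (i) splitting $\mu_k=\sigma_k+qz_k\phi$ so that the $qz_k\phi$ parts of the various integrals can be collected and matched against the Poisson-equation manipulations; (ii) correctly handling $\partial_t\phi$ on the deforming interface (the material-derivative versus partial-derivative distinction), which is where the $Q$-dependence of $C_m$ enters; and (iii) verifying that the surface-gradient piece $-\nabla_\Gamma\tau_\text{cap}$ in \eqref{FcapCmQ} integrates against $\mb{u}$ (tangentially) to produce the curvature/stretching terms that match $\frac{d}{dt}$ of the capacitive energy — this is the analogue of the $\mb{F}_\text{elas}$/$E_\text{elas}$ pairing. Once every surface term is accounted for, the monotonicity statement is immediate: with $a_k\equiv0$ we have $J_a=0$, $I_p\ge0$ since it is a sum of squares, and $J_p\ge0$ by \eqref{jkcond} and \eqref{jwcond}, so $\frac{d}{dt}(G_S+E_\text{elas}+E_\text{elec})\le0$. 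For the variant with $Q\equiv1$ and \eqref{forcebalancep}, I would note that the $\mb{F}_p$ term of \eqref{Fp} contributes a boundary integral proportional to $j_w$ times $[\cdots]$ whose $Q$-derivative vanishes, and $\int_\Gamma\lambda\kappa_\Gamma\mb{n}\cdot\mb{u} - \int_\Gamma\nabla_\Gamma\lambda\cdot\mb{u}$ equals $\lambda\,\frac{d}{dt}\!\int dm_\Gamma$-type term which is zero because $Q\equiv1$ forces the local area element to be constant; hence the same identity holds.
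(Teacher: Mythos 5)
Your overall strategy coincides with the paper's: multiply \eqref{ckeq} by $\mu_k=\sigma_k+qz_k\phi$, use \eqref{poisson} to convert the $\phi\,\partial_t(\sum_k qz_kc_k)$ piece into the time derivative of the bulk field energy plus a boundary term, pair the Stokes equation with $\mb{u}$ so that the $(\sum_k qz_kc_k)\nabla\phi\cdot\mb{u}$ terms cancel, and then reconcile the surface terms using \eqref{cap}, \eqref{cont}, \eqref{dtelas} and the constitutive law $C_m=C_m(Q)$. Your points about $\tau_\text{cap}$ being designed to absorb the $\PD{C_m}{Q}\PD{Q}{t}$ contribution, and about the $Q\equiv 1$ variant killing both the $\partial_t C_m$ term and the $\lambda$-integral, are also the paper's argument (the latter via Lemma \ref{divongamma} applied with $\partial_t Q=0$; note the surface forces end up paired with $\PD{\mb{X}}{t}$ rather than $\mb{u}$, though the two differ only by the normal slip $j_w\mb{n}$, so the tangential pairing is unaffected).

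There is, however, one genuine gap: the step you label as key technical point (ii) --- ``correctly handling $\partial_t\phi$ on the deforming interface'' --- is not a bookkeeping issue that \eqref{cap} resolves, and you give no mechanism for it. The concentration computation leaves you with the boundary integral
$\int_{\Gamma}\jump{\phi\,\PD{}{\mb{n}}\paren{\epsilon\PD{\phi}{t}}}dm_\Gamma$
plus a moving-domain term $\jump{\tfrac{\epsilon}{2}\abs{\nabla\phi}^2+\phi\nabla\cdot(\epsilon\nabla\phi)}\PD{\mb{X}}{t}\cdot\mb{n}$. The condition \eqref{cap} constrains $\epsilon\PD{\phi}{\mb{n}}$, not $\PD{}{\mb{n}}\paren{\PD{\phi}{t}}$; on a moving interface these operators do not commute, and the discrepancy involves the normal velocity, the curvature $\kappa_\Gamma$, the second normal derivative, and $\nabla_\Gamma v_\Gamma\cdot\nabla_\Gamma\phi$. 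The paper isolates exactly this in Lemma \ref{calcidentity}, proved via the signed distance function, which shows
$\PD{}{\mb{n}}\paren{\PD{w}{t}}=D_t^\mb{n}\paren{\PD{w}{\mb{n}}}+\nabla_\Gamma v_\Gamma\cdot\nabla_\Gamma w-v_\Gamma\,\PDD{2}{w}{\mb{n}}$,
and only after this substitution do the correction terms $\kappa_\Gamma\phi\PD{\phi}{\mb{n}}-\abs{\nabla_\Gamma\phi}^2$ emerge in just the right combination to cancel against the Maxwell-stress jump $\jump{\tfrac{\epsilon}{2}(\abs{\PD{\phi}{\mb{n}}}^2-\abs{\nabla_\Gamma\phi}^2)}\mb{n}+C_m[\phi]\nabla_\Gamma[\phi]$ coming from \eqref{forcebalanceE}, leaving $-[\phi]D_t^\mb{n}(C_m[\phi])$, which \eqref{Dtw} then converts into $\frac{d}{dt}\int_\Gamma\tfrac12 C_m[\phi]^2\,dm_\Gamma$ plus the curvature and $\partial_t C_m$ remainders matched by $\mb{F}_\text{cap}$. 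Without this commutator identity (or an equivalent), the several surface integrals you list cannot be shown to combine ``with no leftover,'' so the proposal as written does not close; the rest of the argument, including the monotonicity conclusion from \eqref{jkcond} and \eqref{jwcond}, is fine.
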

In addition to the terms present in \eqref{FEE}, we now have 
an electrostatic term in the energy.

Before proving Theorem \ref{main}, we collect some calculus results. 
We first introduce some notation. 
Take a point $\mb{x}=\mb{x}_0\in \Gamma$ at $t=t_0$. 
Let $\mb{X}^\mb{n}(t;x_0,t_0)$ be 
the space-time curve that 
goes through $\mb{x}=\mb{x}_0$ at time $t=t_0$ and is orthogonal 
to $\Gamma$ at each time instant.
Equivalently, $\mb{X}^\mb{n}(t;\mb{x}_0,t_0)$ is the solution to the 
following ordinary differential equation:
\begin{equation}
\D{}{t}\mb{X}^\mb{n}(t;\mb{x}_0,t_0)
=v_\Gamma(\mb{X}^\mb{n},t)\mb{n}(\mb{X}^\mb{n},t),
\; \mb{X}^\mb{n}(t_0;,\mb{x}_0,t_0)=\mb{x}_0.
\end{equation}
Here, $\mb{n}(\mb{x},t)$ is the unit normal at the point 
$\mb{x}$ at time $t$ pointing from $\Omega_i$ into $\Omega_e$,
and $v_\Gamma(\mb{x},t)\mb{n}(\mb{x},t)$ is the normal velocity of $\Gamma$
at that point.
Consider a quantity $w(\mb{x},t)$ defined 
on the evolving surface $\Gamma$. Define:
\begin{equation}
(D^\mb{n}_t w)(\mb{x}_0,t_0)=
\at{\D{}{t}w(\mb{X}^\mb{n}(t;\mb{x}_0,t_0),t)}{\mb{x}=\mb{x}_0,t=t_0}.
\label{Dnt}
\end{equation} 
The above expression is an analogue of the convective derivative 
on the surface $\Gamma$. We shall make use of the following 
well-known identity:
\begin{equation}
\D{}{t}\int_\Gamma wdm_\Gamma=
\int_\Gamma \paren{D^\mb{n}_t w+\kappa_\Gamma w v_\Gamma}dm_\Gamma
\label{Dtw}
\end{equation}
where $\kappa_\Gamma$ is the sum of the principal curvatures of $\Gamma$.
We now state two calculus identities that we shall find useful in 
the proof of Theorem \ref{main}. 

\begin{lemma}\label{divongamma}
Let $w(\mb{x},t)$ be a smooth function on $\Gamma_t$.
We have:
\begin{equation}
\int_\Gamma \paren{wQ^{-1}\PD{Q}{t}}dm_\Gamma=
\int_\Gamma \paren{\kappa_\Gamma w \mb{n}-(\nabla_\Gamma w)}
\cdot\PD{\mb{X}}{t}dm_\Gamma.\label{Qt}
\end{equation}
where
$Q$ is the Jacobian determinant of $\Gamma_t$ with respect 
to the reference configuration $\Gamma_\text{ref}$.
\end{lemma}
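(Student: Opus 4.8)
\textbf{Proof proposal for Lemma \ref{divongamma}.}

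The plan is to transfer the integral on the moving membrane $\Gamma_t$ to the fixed reference surface $\Gamma_\text{ref}$, differentiate the area element there, and then come back to $\Gamma_t$ and apply a surface divergence theorem. Since $dm_\Gamma=Q\,dm_{\Gamma_\text{ref}}$, the left-hand side of \eqref{Qt} equals $\int_{\Gamma_\text{ref}}(w\circ\mb{X})\,\PD{Q}{t}\,dm_{\Gamma_\text{ref}}$, so everything hinges on the kinematic identity $\PD{Q}{t}=Q\,(\nabla_\Gamma\cdot\mb{V})$, where $\mb{V}:=\PD{\mb{X}}{t}$ is the velocity of material membrane points and $\nabla_\Gamma\cdot$ denotes the surface divergence on $\Gamma_t$. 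Granting this, the left-hand side becomes $\int_\Gamma w\,(\nabla_\Gamma\cdot\mb{V})\,dm_\Gamma$. I would then write $w\,(\nabla_\Gamma\cdot\mb{V})=\nabla_\Gamma\cdot(w\mb{V})-\nabla_\Gamma w\cdot\mb{V}$ and invoke the divergence theorem on the closed surface $\Gamma$, namely $\int_\Gamma\nabla_\Gamma\cdot\mb{F}\,dm_\Gamma=\int_\Gamma\kappa_\Gamma\,(\mb{F}\cdot\mb{n})\,dm_\Gamma$ for any smooth ambient vector field $\mb{F}$ (this follows by splitting $\mb{F}$ into its tangential and normal parts, using $\nabla_\Gamma\cdot\mb{n}=\kappa_\Gamma$, and noting that the tangential part integrates to zero because $\Gamma$ has no boundary), applied to $\mb{F}=w\mb{V}$. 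This gives $\int_\Gamma w\,(\nabla_\Gamma\cdot\mb{V})\,dm_\Gamma=\int_\Gamma\kappa_\Gamma\,w\,(\mb{V}\cdot\mb{n})\,dm_\Gamma-\int_\Gamma\nabla_\Gamma w\cdot\mb{V}\,dm_\Gamma$, which, after substituting $\mb{V}=\PD{\mb{X}}{t}$, is exactly the right-hand side $\int_\Gamma(\kappa_\Gamma w\,\mb{n}-\nabla_\Gamma w)\cdot\PD{\mb{X}}{t}\,dm_\Gamma$ of \eqref{Qt}.

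The one genuinely differential-geometric step, and the one I expect to be the main obstacle, is the kinematic identity $\PD{Q}{t}=Q\,(\nabla_\Gamma\cdot\mb{V})$ — the statement that the logarithmic rate of change of the area form equals the surface divergence of the surface velocity. I would establish it in a local material chart $\bm{\theta}$ by writing $Q^2$ as the ratio of $\det g$ to $\det g_\text{ref}$, where $g_{\alpha\beta}=\PD{\mb{X}}{\theta^\alpha}\cdot\PD{\mb{X}}{\theta^\beta}$ is the first fundamental form, using Jacobi's formula $\PD{}{t}\ln\det g=g^{\alpha\beta}\PD{g_{\alpha\beta}}{t}$ together with $\PD{g_{\alpha\beta}}{t}=\PD{\mb{V}}{\theta^\alpha}\cdot\PD{\mb{X}}{\theta^\beta}+\PD{\mb{X}}{\theta^\alpha}\cdot\PD{\mb{V}}{\theta^\beta}$, and finally recognizing $g^{\alpha\beta}\,\PD{\mb{X}}{\theta^\beta}\cdot\PD{\mb{V}}{\theta^\alpha}$ as $\nabla_\Gamma\cdot\mb{V}$. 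Alternatively, the same identity can be read off \eqref{Dtw}: comparing \eqref{Dtw} with the direct reference-configuration computation $\D{}{t}\int_\Gamma w\,dm_\Gamma=\int_\Gamma\big(\dot{w}+wQ^{-1}\PD{Q}{t}\big)\,dm_\Gamma$ (with $\dot{w}$ the material time derivative), using $\dot{w}=D^\mb{n}_t w+\nabla_\Gamma w\cdot\mb{V}_\parallel$ (recall \eqref{Dnt}), performing one further integration by parts, and invoking the arbitrariness of $w$, forces $Q^{-1}\PD{Q}{t}=\nabla_\Gamma\cdot\mb{V}$; note this second route is essentially the main computation above reorganized.

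Beyond this, the only points needing care are bookkeeping ones: that $\mb{V}=\PD{\mb{X}}{t}$ is genuinely smooth on the smooth surface $\Gamma_t$, so the surface integration by parts and the tangential/normal splitting are legitimate; and the curvature sign convention, for which I would take $\nabla_\Gamma\cdot\mb{n}=\kappa_\Gamma$ as the definition of $\kappa_\Gamma$ — the sum of the principal curvatures — consistently with how $\kappa_\Gamma$ already enters \eqref{Dtw}.
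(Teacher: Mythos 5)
Your proof is correct, but it is organized differently from the paper's. The paper never isolates the kinematic identity $Q^{-1}\PD{Q}{t}=\nabla_\Gamma\cdot\PD{\mb{X}}{t}$ and never integrates by parts on $\Gamma$: instead it computes $\D{}{t}\int_\Gamma w\,dm_\Gamma$ twice — once by pulling back to $\Gamma_\text{ref}$, giving $\int_\Gamma\bigl(\PD{w}{t}+wQ^{-1}\PD{Q}{t}\bigr)dm_\Gamma$ with $\PD{w}{t}$ the material derivative, and once via the quoted transport theorem \eqref{Dtw} — and then cancels the two using the chain-rule relation \eqref{wt}, $\PD{w}{t}=D_t^\mb{n}w+\nabla_\Gamma w\cdot\PD{\mb{X}}{t}$; the term $-\nabla_\Gamma w\cdot\PD{\mb{X}}{t}$ in \eqref{Qt} thus appears directly from \eqref{wt} rather than from an integration by parts. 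Your route — $\dot Q=Q\,\nabla_\Gamma\cdot\mb{V}$ via Jacobi's formula on the first fundamental form, followed by the product rule and the closed-surface divergence theorem $\int_\Gamma\nabla_\Gamma\cdot\mb{F}\,dm_\Gamma=\int_\Gamma\kappa_\Gamma\,\mb{F}\cdot\mb{n}\,dm_\Gamma$ — is more self-contained (it does not lean on \eqref{Dtw} or \eqref{wt}) at the cost of importing two standard facts of surface calculus, and it makes the geometric content of the lemma (rate of area change equals surface divergence of the velocity) explicit; the paper's version is shorter precisely because \eqref{Dtw} is already assumed. Your sign convention $\nabla_\Gamma\cdot\mb{n}=\kappa_\Gamma$ is consistent with the paper's use of $\kappa_\Gamma$ in \eqref{Dtw}, and as you yourself note, your second suggested derivation of the kinematic identity is essentially the paper's argument rearranged.
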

\begin{proof}
Note that
\begin{equation}
\PD{w}{t}=D_t^\mb{n}w+(\nabla_\Gamma w)\cdot\PD{\mb{X}}{t}\label{wt}
\end{equation}
where the partial derivatives in $t$ is along material trajectories
(constant $\bm{\theta}$). The validity of the above identity should be 
clear by considering the geometric relation between the orthogonal trajectory 
$\mb{X}^\mb{n}$ and the material trajectory $\mb{X}$. 
We also have the following relation for the time derivative of the integral 
of $w$ over $\Gamma$.
\begin{equation}
\begin{split}
&\D{}{t}\int_\Gamma w dm_\Gamma=\D{}{t}\int_{\Gamma_{\text{ref}}}
wQdm_{\Gamma_\text{ref}}
=\int_{\Gamma_{\text{ref}}}\paren{\PD{w}{t}Q+w\PD{Q}{t}}dm_{\Gamma_\text{ref}}\\
=&\int_\Gamma\paren{\PD{w}{t}+wQ^{-1}\PD{Q}{t}}dm_\Gamma
\end{split}
\end{equation}
Comparing this with \eqref{Dtw} (with $v_\Gamma=\PD{\mb{X}}{t}\cdot\mb{n}$) 
and using the identity \eqref{wt}, we obtain the desired result.
\end{proof}

\begin{lemma}\label{calcidentity}
Suppose $w(\mb{x},t), \mb{x}\in (\Omega_i\cup \Gamma)$ 
is a smooth function defined in $\Omega_i$ whose
derivatives are continuous up to the 
boundary $\Gamma$. Then, we have the following identity:
\begin{equation}
\begin{split}
&\int_{\Gamma_i}\paren{w\PD{}{\mb{n}}\paren{\PD{w}{t}}+(w\Delta w) v_\Gamma}
dm_\Gamma\\
=&\int_{\Gamma_i}\paren{w D_t^\mb{n}\paren{\PD{w}{\mb{n}}}+
\paren{\kappa_\Gamma w\PD{w}{\mb{n}}-\abs{\nabla_\Gamma w}^2} v_\Gamma}dm_\Gamma \label{identity}
\end{split}
\end{equation}
where $\int_{\Gamma_i}$ denotes 
integration over the $\Omega_i$ face of $\Gamma$.
A similar identity holds for functions defined in $\Omega_e\cup \Gamma$.
\end{lemma}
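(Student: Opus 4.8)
The plan is to promote \eqref{identity} to a relation that holds pointwise on $\Gamma$ up to a surface divergence, and then to integrate over the closed surface $\Gamma$ so that the divergence term drops out. Throughout, every quantity is evaluated on the $\Omega_i$ face of $\Gamma$. The key distinction to keep in mind is that $\PD{w}{t}$ in the statement is the ordinary (Eulerian) time derivative of the bulk function $w$, so it commutes with $\nabla$, whereas $D_t^\mb{n}$ carries an extra transport term $v_\Gamma(\mb{n}\cdot\nabla)$ along the orthogonal trajectory $\mb{X}^\mb{n}$ of \eqref{Dnt}. I write $\PDD{2}{w}{\mb{n}}$ for the second normal derivative and $\Delta_\Gamma=\nabla_\Gamma\cdot\nabla_\Gamma$ for the Laplace--Beltrami operator on $\Gamma$.

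First I would compute $D_t^\mb{n}\paren{\PD{w}{\mb{n}}}$ by differentiating $\PD{w}{\mb{n}}=\mb{n}\cdot\nabla w$ along $\mb{X}^\mb{n}$, via the product rule $D_t^\mb{n}(\mb{n}\cdot\nabla w)=(D_t^\mb{n}\mb{n})\cdot\nabla w+\mb{n}\cdot D_t^\mb{n}(\nabla w)$. Since $\nabla w$ extends smoothly up to $\Gamma$, the chain rule gives $D_t^\mb{n}(\nabla w)=\nabla\paren{\PD{w}{t}}+v_\Gamma(\mb{n}\cdot\nabla)(\nabla w)$; combining this with the kinematic identity $D_t^\mb{n}\mb{n}=-\nabla_\Gamma v_\Gamma$ for the unit normal of a surface with normal speed $v_\Gamma$, and using that $\nabla_\Gamma v_\Gamma$ is tangential, one obtains
\[
D_t^\mb{n}\paren{\PD{w}{\mb{n}}}=\PD{}{\mb{n}}\paren{\PD{w}{t}}+v_\Gamma\,\PDD{2}{w}{\mb{n}}-(\nabla_\Gamma v_\Gamma)\cdot(\nabla_\Gamma w).
\]

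Next I would eliminate the second normal derivative with the standard splitting of the Laplacian at $\Gamma$, namely $\Delta w=\PDD{2}{w}{\mb{n}}+\kappa_\Gamma\PD{w}{\mb{n}}+\Delta_\Gamma w$ (with $\kappa_\Gamma$ the sum of the principal curvatures, in the same sign convention as elsewhere in the paper). Substituting into the previous display, multiplying by $w$ and rearranging gives the pointwise identity
\[
w\PD{}{\mb{n}}\paren{\PD{w}{t}}+(w\Delta w)v_\Gamma=w\,D_t^\mb{n}\paren{\PD{w}{\mb{n}}}+\kappa_\Gamma w\PD{w}{\mb{n}}\,v_\Gamma+v_\Gamma w\,\Delta_\Gamma w+w\,(\nabla_\Gamma v_\Gamma)\cdot(\nabla_\Gamma w).
\]
I would then integrate this over $\Gamma$, using the $\Omega_i$-face trace of $w$, which is a smooth function on the closed surface $\Gamma$. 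Since $\Gamma$ has no boundary and $wv_\Gamma\nabla_\Gamma w$ is tangential, the surface divergence theorem gives $\int_\Gamma\nabla_\Gamma\cdot(wv_\Gamma\nabla_\Gamma w)\,dm_\Gamma=0$, and the product rule $\nabla_\Gamma\cdot(wv_\Gamma\nabla_\Gamma w)=v_\Gamma\abs{\nabla_\Gamma w}^2+w(\nabla_\Gamma v_\Gamma)\cdot(\nabla_\Gamma w)+v_\Gamma w\,\Delta_\Gamma w$ shows that the last two terms of the pointwise identity together integrate to $-\int_\Gamma v_\Gamma\abs{\nabla_\Gamma w}^2\,dm_\Gamma$. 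What remains is exactly \eqref{identity}. The case of a function on $\Omega_e\cup\Gamma$ follows from the identical computation: nothing above used on which side of $\Gamma$ the function lives, and replacing $\mb{n}$ by $-\mb{n}$ leaves every term unchanged.

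The step that needs genuine care is the first one: one must track precisely the gap between the Eulerian $\PD{}{t}$ and the transported operator $D_t^\mb{n}$, and one must justify the kinematic identity $D_t^\mb{n}\mb{n}=-\nabla_\Gamma v_\Gamma$. The cleanest route is to extend $\mb{n}$ off $\Gamma_t$ as $\nabla d$ with $d$ the signed distance to $\Gamma_t$: then $\mb{n}\cdot\nabla\mb{n}=\frac{1}{2}\nabla\abs{\mb{n}}^2=0$ and $\PD{d}{t}=-v_\Gamma$ on $\Gamma$, so $D_t^\mb{n}\mb{n}=\PD{\mb{n}}{t}=\nabla\PD{d}{t}$, which is tangential (being the derivative of a unit vector along the trajectory) and therefore equals $\nabla_\Gamma\PD{d}{t}=-\nabla_\Gamma v_\Gamma$. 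The only remaining point of attention is to take the curvature sign in the normal/tangential splitting of $\Delta$ consistent with the paper's $\kappa_\Gamma$; this is routine.
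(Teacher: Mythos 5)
Your proposal is correct and follows essentially the same route as the paper: both establish the pointwise commutator identity $\PD{}{\mb{n}}\paren{\PD{w}{t}}=D_t^\mb{n}\paren{\PD{w}{\mb{n}}}+\nabla_\Gamma v_\Gamma\cdot\nabla_\Gamma w-v_\Gamma\PDD{2}{w}{\mb{n}}$ via the signed distance function, combine it with the normal--tangential splitting $\Delta w=\PDD{2}{w}{\mb{n}}+\kappa_\Gamma\PD{w}{\mb{n}}+\Delta_\Gamma w$, and integrate by parts on the closed surface $\Gamma$. The only cosmetic difference is that you package the commutator computation through the kinematic identity $D_t^\mb{n}\mb{n}=-\nabla_\Gamma v_\Gamma$, whereas the paper manipulates $\nabla\psi\cdot\nabla w_t$ directly; the underlying algebra is the same.
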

As we shall see, we only need $w$ to be defined in the vicinity 
of $\Gamma$ for the above to be true.
\begin{proof} 
We only treat the $\Gamma_i$ case.
The proof for $\Gamma_e$ is exactly the same.
We decompose the integrand in the left hand side of \eqref{identity}
into tangential and normal contributions.
It is well-known that the Laplacian can be written as:
\begin{equation}
\Delta w=\PDD{2}{w}{\mb{n}}+\kappa_\Gamma \PD{w}{\mb{n}}+\Delta_\Gamma w
\label{lapphi}
\end{equation}
where $\Delta_\Gamma$ is the Laplace-Beltrami 
operator of the surface $\Gamma$.

We now rewrite $\PD{}{t}\paren{\PD{w}{\mb{n}}}$ in 
\eqref{identity} in an analogous fashion.
For this, we first introduce the signed distance function 
$\psi(\mb{x},t)$ in a neighborhood
of $\Gamma$:
\begin{equation}
\psi(\mb{x},t)=
\begin{cases}
\text{dist}(\mb{x},\Gamma_t) &\text{ if } \mb{x}\in \Omega_e,\\
0 &\text{ if } \mb{x}\in \Gamma_t,\\
-\text{dist}(\mb{x},\Gamma_t) &\text{ if } \mb{x}\in \Omega_i,
\end{cases}
\end{equation}
where $\text{dist}(\mb{x},\Gamma_t)$ is the distance between 
$\mb{x}$ and $\Gamma_t$.
Clearly, $\nabla\psi$
evaluated at any point on $\Gamma$ gives the outward unit 
normal vector $\mb{n}$. 
Introduce the following vector field 
$\mb{v}$ defined in a neighborhood of $\Gamma$ where $\psi$
is smooth:
\begin{equation}
\mb{v}=v_\Gamma\mb{n} \text{ on } \Gamma,\; (\nabla \mb{v})\nabla \psi=0.
\label{defv}
\end{equation}
The second condition above just says that $\mb{v}$ is constant 
along lines perpendicular to the level sets.  
It is well known that the signed distance function satisfies the 
following transport equation in a neighborhood of $\Gamma$:
\begin{equation}
D_v\psi\equiv \PD{\psi}{t}+\mb{v}\cdot\nabla \psi=0.\label{vtransport}
\end{equation}
Note that the above convective derivative evaluated on $\Gamma$
is equal to $D^\mb{n}_t$ defined in \eqref{Dnt}.

For any point on $\Gamma$:
\begin{equation}
\PD{}{\mb{n}}\paren{\PD{w}{t}}=\nabla \psi\cdot \nabla w_t
\end{equation}
where the subscript $t$ indicates the partial derivative 
with respect to $t$. 
We now rewrite this expression as follows:
\begin{equation}
\begin{split}
\nabla \psi\cdot \nabla w_t
&=D_v(\nabla \psi\cdot \nabla w)
-\nabla \psi_t\cdot \nabla w
-\mb{v}\cdot \nabla (\nabla \psi\cdot \nabla w)\\
&=D_v(\nabla \psi\cdot \nabla w)
+\nabla (\mb{v}\cdot \nabla \psi)\cdot \nabla w
-\mb{v}\cdot \nabla (\nabla \psi\cdot \nabla w)
\end{split}\label{psiphit}
\end{equation}
where we used \eqref{vtransport} in the last equality.
Now, consider the second term in the last line:
\begin{equation}
\begin{split}
\nabla (\mb{v}\cdot \nabla \psi)\cdot \nabla w
=&\nabla_\Gamma(\mb{v}\cdot \nabla \psi)\cdot \nabla_\Gamma w
+(\nabla \psi\cdot \nabla (\mb{v}\cdot \nabla \psi))
(\nabla \psi\cdot \nabla w)\\
=&\nabla_\Gamma(\mb{v}\cdot \nabla \psi)\cdot \nabla_\Gamma w
+(\nabla \psi\cdot((\nabla \mb{v}) \nabla \psi))
(\nabla \psi\cdot \nabla w)\\
&+(\mb{v}\cdot((\nabla^2 \psi) \nabla \psi))
(\nabla \psi\cdot \nabla w)
\end{split}\label{vpsiphi1}
\end{equation}
where $\nabla_\Gamma$ is the surface gradient on $\Gamma$.
Note that $\nabla^2\psi$ is {\em not} the Laplacian but the 
matrix of second derivatives of $\psi$.
The second to last term in \eqref{vpsiphi1} is $0$ by \eqref{defv}.
The last term is also $0$, since:
\begin{equation}
(\nabla^2 \psi) \nabla \psi
=\frac{1}{2}\nabla(\abs{\nabla \psi}^2)=0\label{d2psi0}
\end{equation}
where we used $\abs{\nabla\psi}^2=1$.
Thus \eqref{vpsiphi1} reduces to
\begin{equation}
\nabla (\mb{v}\cdot \nabla \psi)\cdot \nabla w
=\nabla_\Gamma(\mb{v}\cdot \nabla \psi)\cdot \nabla_\Gamma w
\label{vpsiphi11}.
\end{equation}
Let us look at the final term in \eqref{psiphit}. 
\begin{equation}
\mb{v}\cdot \nabla(\nabla \psi\cdot \nabla w)
=(\mb{v}\cdot \nabla \psi)(\nabla \psi\cdot(\nabla \psi \cdot \nabla w))
=(\mb{v}\cdot \nabla \psi)(\nabla \psi\cdot((\nabla^2w)\nabla \psi))
\label{vpsiphi2}
\end{equation}
where we used \eqref{d2psi0} in the last equality.
Combining \eqref{vpsiphi11}, \eqref{vpsiphi2} and \eqref{psiphit}, we have:
\begin{equation}
\nabla \psi\cdot \nabla w_t
=D_v(\nabla \psi\cdot \nabla w)
+\nabla_\Gamma(\mb{v}\cdot \nabla \psi)\cdot \nabla_\Gamma w
-(\mb{v}\cdot \nabla \psi)(\nabla \psi\cdot((\nabla^2w)\nabla \psi))
\end{equation}
or equivalently:
\begin{equation}
\PD{}{\mb{n}}\paren{\PD{w}{t}}
=D_t^\mb{n}\paren{\PD{w}{\mb{n}}}
+\nabla_\Gamma v_\Gamma \cdot \nabla_\Gamma w
-v_\Gamma\paren{\PDD{2}{w}{\mb{n}}}
\label{dtdphidn}
\end{equation}
where we used \eqref{defv}, $\mb{n}=\nabla \psi$ on $\Gamma$ and 
the equality of $D_t^\mb{n}$ and $D_v$ on $\Gamma$.

Now, consider the integral:
\begin{equation}
\begin{split}
&\int_{\Gamma_i}\paren{w\PD{}{\mb{n}}\paren{\PD{w}{t}}
+(w\Delta w)v_\Gamma}dm_\Gamma\\
=&\int_{\Gamma_i}\paren{w D_t^\mb{n}\paren{\PD{w}{\mb{n}}}
+w\nabla_\Gamma v_\Gamma\cdot \nabla_\Gamma w
+w\paren{\Delta_\Gamma w+\kappa_\Gamma \PD{w}{\mb{n}}}v_\Gamma} dm_\Gamma\\
=&\int_{\Gamma_i}\paren{w D_t^\mb{n}\paren{\PD{w}{\mb{n}}}
+\paren{\kappa_\Gamma w\PD{w}{\mb{n}}
-\abs{\nabla_\Gamma w}^2}v_\Gamma}dm_\Gamma
\end{split}
\end{equation}
where we used \eqref{lapphi} and \eqref{dtdphidn} in the first 
equality and integrated by parts along $\Gamma$ in the second 
equality. Note that there are no boundary terms since $\Gamma$
is a closed compact surface. This proves \eqref{identity}.
\end{proof}

We are now ready to prove Theorem \ref{main}.

\begin{proof}[Proof of Theorem \ref{main}]
First, multiply \eqref{ckeq} with $\mu_k$ in \eqref{muk} and integrate 
over $\Omega_i$ and sum in $k$:
\begin{equation}
\sum_{k=1}^N\int_{\Omega_i}\mu_k\paren{\PD{c_k}{t}+\nabla \cdot(\mb{u}c_k)}d\mb{x}
=\sum_{k=1}^N\int_{\Omega_i}\mu_k \nabla \cdot \paren{c_k\frac{D_k}{k_BT}\nabla \mu_k}d\mb{x}.\label{mkcke}
\end{equation}
The summand in the right hand side becomes:
\begin{equation}
\begin{split}
\int_{\Omega_i}\mu_k \nabla \cdot \paren{c_k\frac{D_k}{k_BT}\nabla \mu_k}d\mb{x}
&=\int_{\Gamma_i}\paren{\mu_k {c_k\frac{D_k}{k_BT}\nabla \mu_k}\cdot \mb{n}}dm_\Gamma\\
&-\int_{\Omega_i}\paren{c_k\frac{D_k}{k_BT}\abs{\nabla \mu_k}^2}d\mb{x}
\end{split}
\end{equation}
where $\mb{n}$ is the outward normal on $\Gamma$. Consider the 
left hand side of \eqref{mkcke}.
\begin{equation}
\begin{split}
&\sum_{k=1}^N\mu_k\PD{c_k}{t}=\sum_{k=1}^N\paren{\sigma_k\PD{c_k}{t}+qz_k\phi\PD{c_k}{t}}\\
=&\sum_{k=1}^N\PD{\omega}{c_k}\PD{c_k}{t}+\phi\PD{}{t}\paren{\sum_{k=1}^N qz_kc_k}
=\PD{\omega}{t}-\phi\PD{}{t}\paren{\nabla \cdot\paren{\epsilon\nabla \phi}}
\end{split}\label{mkcksume}
\end{equation}
We used \eqref{muk} in the first equality and \eqref{poisson} in the last equality.
Integrate final expression in \eqref{mkcksume} over $\Omega_i$.
\begin{equation}
\begin{split}
&\int_{\Omega_i}\paren{\PD{\omega}{t}-\phi\nabla\cdot
\paren{\epsilon\nabla\paren{\PD{\phi}{t}}}} d\mb{x}\\
=&\int_{\Gamma_i}\paren{-\phi\PD{}{\mb{n}}\paren{\epsilon\PD{\phi}{t}}}dm_\Gamma
+\int_{\Omega_i} \PD{}{t}\paren{\omega+\frac{\epsilon}{2}\abs{\nabla \phi}^2}d\mb{x}.
\end{split}
\end{equation}
For the second term in the left hand side of \eqref{mkcke}, we have, 
similarly to \eqref{mkcksume}:
\begin{equation}
\sum_{k=1}^N\mu_k\mb{u}\cdot \nabla c_k=\nabla \cdot (\mb{u}\omega) 
+\phi\cdot \nabla\paren{\mb{u} \sum_{k=1}^Nqz_kc_k}.
\end{equation}
Integrate the above expression over $\Omega_i$:
\begin{equation}
\begin{split}
&\int_{\Omega_i}\paren{\nabla \cdot (\mb{u}\omega) 
+\phi\cdot \nabla\paren{\mb{u}\sum_{k=1}^Nqz_kc_k}}d\mb{x}\\
=&\int_{\Gamma_i}\paren{\omega+\phi\sum_{k=1}^Nqz_kc_k}
\mb{u}\cdot\mb{n}dm_\Gamma
-\int_{\Omega_i}\paren{\sum_{k=1}^Nqz_kc_k}\mb{u}\cdot \nabla \phi d\mb{x}
\end{split}
\end{equation}
Collecting the above calculations, 
we have rewritten identity \eqref{mkcke} as:
\begin{equation}
\begin{split}
&\int_{\Omega_i} 
\PD{}{t}\paren{\omega+\frac{\epsilon}{2}\abs{\nabla \phi}^2}d\mb{x}
+\int_{\Gamma_i} \paren{-\phi\PD{}{\mb{n}}\paren{\epsilon\PD{\phi}{t}}}
dm_\Gamma\\
=&-\int_{\Gamma_i}
\paren{\paren{\omega-\sum_{k=1}^N c_k\sigma_k}\mb{u}\cdot\mb{n}
+\sum_{k=1}^N \mu_k\paren{c_k\paren{\mb{u}
-\frac{D_k}{k_BT}\nabla \mu_k}\cdot \mb{n}}}
dm_\Gamma\\
+&\int_{\Omega_i}\paren{-\sum_{k=1}^Nc_k\frac{D_k}{k_BT}\abs{\nabla \mu_k}^2
+\paren{\sum_{k=1}^Nqz_kc_k}\mb{u}\cdot \nabla \phi}d\mb{x}
\end{split}
\end{equation}
where we used $qz_k\phi=\mu_k-\sigma_k$ (Eq. \eqref{muk}) in the 
boundary integral after the equality.
Performing a similar calculation on $\Omega_e$, and adding this to 
the above, we find:
\begin{equation}
\begin{split}
&\D{}{t}\int_{\Omega_i\cup \Omega_e}
\paren{\omega+\frac{\epsilon}{2}\abs{\nabla \phi}^2}d\mb{x}
-\int_{\Gamma}\jump{\phi\PD{}{\mb{n}}\paren{\epsilon\PD{\phi}{t}}}dm_\Gamma\\
-&\int_{\Gamma}\paren{
\jump{\omega+\frac{\epsilon}{2}\abs{\nabla \phi}^2}\PD{\mb{X}}{t}\cdot\mb{n}
}dm_\Gamma\\
=&-\int_{\Gamma}\paren{\jump{\pi_w}\mb{u}\cdot\mb{n}
+\sum_{k=1}^N\paren{[\mu_k](j_k+a_k)+[c_k\mu_k]\PD{\mb{X}}{t}\cdot\mb{n}}}
dm_\Gamma\\
+&\int_{\Omega_i}\paren{-\sum_{k=1}^Nc_k\frac{D_k}{k_BT}\abs{\nabla \mu_k}^2
+\paren{\sum_{k=1}^Nqz_kc_k}\mb{u}\cdot \nabla \phi}d\mb{x}
\end{split}
\end{equation}
where we used \eqref{muw} and \eqref{ckbc} 
to rewrite the boundary integral after 
the equality. Note that the second boundary integral before the equality 
comes from the fact that the boundary $\Gamma$ is moving.
Rearranging terms and using \eqref{cont}, we have:
\begin{equation}
\begin{split}
&\D{}{t}\int_{\Omega_i\cup \Omega_e}
\paren{\omega+\frac{\epsilon}{2}\abs{\nabla \phi}^2}d\mb{x}\\
=&\int_{\Gamma}\paren{
\jump{\phi\PD{}{\mb{n}}\paren{\epsilon\PD{\phi}{t}}}
+\jump{\frac{\epsilon}{2}\abs{\nabla \phi}^2
+\phi\nabla\cdot(\epsilon\nabla \phi)
}\PD{\mb{X}}{t}\cdot \mb{n}}dm_\Gamma\\
-&\int_{\Gamma}\paren{[\pi_w]j_w
+\sum_{k=1}^N[\mu_k](j_k+a_k)}dm_\Gamma\\
+&\int_{\Omega_i\cup\Omega_e}\paren{-\sum_{k=1}^Nc_k\frac{D_k}{k_BT}\abs{\nabla \mu_k}^2
+\paren{\sum_{k=1}^Nqz_kc_k}\mb{u}\cdot \nabla \phi}d\mb{x}.
\end{split}\label{main1}
\end{equation}
We used $\mu_k-\sigma_k=qz_k\phi$ and used \eqref{poisson} to rewrite 
the first boundary integral after the equality.
Note that:
\begin{equation}
\begin{split}
&\int_{\Gamma}
\jump{\phi\PD{}{\mb{n}}\paren{\epsilon\PD{\phi}{t}}
+\phi\nabla\cdot(\epsilon\nabla \phi)
\PD{\mb{X}}{t}\cdot\mb{n}}dm_\Gamma\\
=&\int_{\Gamma}
\jump{\phi D_t^\mb{n}\paren{\epsilon \PD{\phi}{\mb{n}}}
+\paren{\kappa_\Gamma \phi\epsilon\PD{\phi}{\mb{n}} 
-\epsilon \abs{\nabla_\Gamma \phi}^2}
\PD{\mb{X}}{t}\cdot\mb{n}}dm_\Gamma
\end{split}
\end{equation}
where we used Lemma \ref{calcidentity} with $w=\phi$ and 
$v_\Gamma=\PD{\mb{X}}{t}\cdot \mb{n}$ in \eqref{identity}.
Using this and the definition of $\nabla_\Gamma$, we may 
rewrite the first boundary integral in \eqref{main1} as:
\begin{equation}
\begin{split}
&\int_{\Gamma}\paren{
\jump{\phi\PD{}{\mb{n}}\paren{\epsilon\PD{\phi}{t}}}
+\jump{\frac{\epsilon}{2}\abs{\nabla \phi}^2
+\phi\nabla\cdot(\epsilon\nabla \phi)
}\PD{\mb{X}}{t}\cdot\mb{n}}dm_\Gamma\\
=&-\int_{\Gamma}[\phi]D_t^\mb{n}(C_m[\phi])dm_\Gamma\\
&+\int_{\Gamma}\paren{-\kappa_\Gamma C_m[\phi]^2
+\jump{\frac{\epsilon}{2}\paren{\abs{\PD{\mb{\phi}}{\mb{n}}}^2
-\abs{\nabla_\Gamma \phi}^2}}}\PD{\mb{X}}{t}\cdot\mb{n}
dm_\Gamma\label{phibterms}
\end{split}
\end{equation}
where we used \eqref{cap}.

We now turn to equation \eqref{stokesE}. Multiply this by $\mb{u}$
and integrate over $\Omega_i$:
\begin{equation}
\begin{split}
&\int_{\Omega_i} \mb{u}\cdot(\nu \Delta \mb{u} -\nabla p)d\mb{x}
-\int_{\Omega_i}\paren{\sum_{k=1}^Nqz_kc_k}\mb{u}\cdot \nabla \phi d\mb{x}\\
=&\int_{\Gamma_i}\paren{\Sigma(\mb{u},p)\mb{n}}\cdot\mb{u}dm_\Gamma-
\int_{\Omega_i}\mb{\nu}\abs{\nabla \mb{u}}^2d\mb{x}\\
&-\int_{\Omega_i}
\paren{\sum_{k=1}^Nqz_kc_k}\mb{u}\cdot \nabla \phi d\mb{x}=0
\end{split}
\end{equation}
Performing a similar calculation on $\Omega_e$ and by summation, we have:
\begin{equation}\label{sigupgamma}
\int_{\Gamma}\jump{\paren{\Sigma(\mb{u},p)\mb{n}}\cdot\mb{u}}dm_\Gamma-
\int_{\Omega_i\cup \Omega_e}\mb{\nu}\abs{\nabla \mb{u}}^2d\mb{x}
=\int_{\Omega_i\cup \Omega_e}
\paren{\sum_{k=1}^Nqz_kc_k}\mb{u}\cdot \nabla \phi d\mb{x}
\end{equation}

Let us first assume \eqref{forcebalanceE} holds. 
First write $\Sigma_e(\phi)\mb{n}$ in the following form:
\begin{equation}
\begin{split}
\Sigma_e(\phi)\mb{n}&=\epsilon\paren{\PD{\phi}{\mb{n}}\nabla \phi-\frac{1}{2}\abs{\nabla \phi}^2\mb{n}}\\
&=\epsilon\paren{
\frac{1}{2}\paren{\abs{\PD{\phi}{\mb{n}}}^2-\abs{\nabla_\Gamma\phi}^2}\mb{n}
+\PD{\phi}{\mb{n}}\nabla_\Gamma\phi}
\end{split}
\end{equation}
We may now write \eqref{forcebalanceE} as:
\begin{equation}
[\Sigma_m(\mb{u},p)\mb{n}]=\mb{F}_\text{elas}+\mb{F}_\text{cap}-
\jump{\frac{\epsilon}{2}\paren{\abs{\PD{\phi}{\mb{n}}}^2
-\abs{\nabla_\Gamma\phi}^2}}\mb{n}
+C_m[\phi]\nabla_\Gamma [\phi]\label{bup}
\end{equation}
where we used \eqref{cap} in the last term.

Using \eqref{phibterms}, \eqref{sigupgamma}
and \eqref{bup} in \eqref{main1} we have:
\begin{equation}
\begin{split}
&\D{}{t}\int_{\Omega_i\cup \Omega_e}
\paren{\omega+\frac{\epsilon}{2}\abs{\nabla \phi}^2}d\mb{x}\\
=&\int_{\Gamma}\paren{-[\phi]D_t^\mb{n}(C_m[\phi])
+\paren{-\kappa_\Gamma C_m[\phi]^2\mb{n}+C_m[\phi]\nabla_\Gamma[\phi]+\mb{F}_\text{cap}}
\cdot \PD{\mb{X}}{t}}dm_\Gamma\\
+&\int_\Gamma \mb{F}_\text{elas}\cdot\PD{\mb{X}}{t}dm_\Gamma
-\int_{\Gamma}\paren{[\widehat{\psi_w}]j_w
+\sum_{k=1}^N[\mu_k](j_k+a_k)}dm_\Gamma\\
-&\int_{\Omega_i\cup\Omega_e}\paren{\sum_{k=1}^Nc_k\frac{D_k}{k_BT}\abs{\nabla \mu_k}^2
+\mb{\nu}\abs{\nabla \mb{u}}^2}d\mb{x}.
\end{split}\label{main2}
\end{equation} 
Comparing \eqref{main2}, \eqref{FEE} and using \eqref{dtelas}, 
we see that the proof 
of \eqref{FEE} rests on the evaluation of the first 
boundary integral after the equality in \eqref{main2}.
\begin{equation}
\begin{split}
&\int_{\Gamma}[\phi]D_t^\mb{n}(C_m[\phi])dm_\Gamma
=\int_{\Gamma}
\paren{D_t^\mb{n}\paren{\frac{1}{2}C_m[\phi]^2}
+\frac{1}{2}(D_t^\mb{n}C_m)[\phi]^2}dm_\Gamma\\
=&\D{}{t}\int_\Gamma \frac{1}{2}C_m[\phi]^2dm_\Gamma
-\int_\Gamma\paren{
\paren{\frac{1}{2}C_m[\phi]^2}\kappa_\Gamma \PD{\mb{X}}{t}\cdot\mb{n}
-\frac{1}{2}(D_t^\mb{n}C_m)[\phi]^2}dm_\Gamma\label{phiDtnphi}
\end{split}
\end{equation}
where we used \eqref{Dtw} with $w=[\phi]$, 
$v_\Gamma=\PD{\mb{X}}{t}\cdot \mb{n}$ in the second equality.
We also have:
\begin{equation}
\int_{\Gamma}(C_m[\phi]\nabla_\Gamma [\phi])dm_\Gamma
=\int_{\Gamma}\paren{\nabla_\Gamma \paren{\frac{1}{2}C_m[\phi]^2}-
\frac{1}{2}(\nabla_\Gamma C_m)[\phi]^2}dm_\Gamma\label{phiCmphi}
\end{equation}
Using \eqref{phiDtnphi} and \eqref{phiCmphi}, we have:
\begin{equation}
\begin{split}
&\int_{\Gamma}\paren{-[\phi]D_t^\mb{n}(C_m[\phi])
+\paren{-\kappa_\Gamma C_m[\phi]^2\mb{n}+C_m[\phi]\nabla_\Gamma[\phi]}\cdot \PD{\mb{X}}{t}}dm_\Gamma\\
=&-\D{}{t}\int_\Gamma \frac{1}{2}C_m[\phi]^2dm_\Gamma
-\int_\Gamma \frac{1}{2}
\paren{D_t^\mb{n}C_m+(\nabla_\Gamma C_m)\cdot\PD{\mb{X}}{t}}[\phi]^2 dm_\Gamma
\\
&+\int_\Gamma\paren{
-\paren{\frac{1}{2}C_m[\phi]^2}\kappa_\Gamma\mb{n}+
\nabla_\Gamma \paren{\frac{1}{2}C_m[\phi]^2}}
\cdot \PD{\mb{X}}{t}dm_\Gamma\\
\end{split}\label{main3}
\end{equation}
Consider the second boundary integral after the equality. 
\begin{equation}
\begin{split}
&\int_\Gamma \frac{1}{2}
\paren{D_t^\mb{n}C_m+(\nabla_\Gamma C_m)\cdot\PD{\mb{X}}{t}}[\phi]^2 dm_\Gamma\\
=&\int_\Gamma \paren{\frac{1}{2}\PD{C_m}{t}[\phi]^2}dm_\Gamma
=\int_\Gamma \paren{\frac{1}{2}\PD{C_m}{Q}\PD{Q}{t}[\phi]^2}dm_\Gamma\\
=&\int_\Gamma \paren{\frac{1}{2}Q\PD{C_m}{Q}[\phi]^2\kappa_\Gamma\mb{n} 
-\nabla_\Gamma\paren{\frac{1}{2}Q\PD{C_m}{Q}[\phi]^2}}\cdot\PD{\mb{X}}{t}
dm_\Gamma
\end{split}\label{Cmtterm}
\end{equation}
where we used \eqref{wt} with $w=C_m$ in the first equality, 
and \eqref{Qt} with $w=\frac{1}{2}Q\PD{C_m}{Q}[\phi]^2$ in the last equality.
From \eqref{Cmtterm}, \eqref{main3}, \eqref{main2} and expression 
\eqref{FcapCmQ} of $\mb{F}_\text{cap}$, we obtain the desired result.

If $Q\equiv 1$ and \eqref{forcebalancep} holds, we may argue as follows. 
Equation \eqref{main2} remains valid with $\mb{F}_\text{cap}$ replaced 
by $\mb{F}_\text{p}$. Verification of \eqref{FEE} rests on the 
evaluation of the first boundary integral in \eqref{main2}.
Proceeding as in the above, we have:
\begin{equation}
\begin{split}
&\int_{\Gamma}\paren{-[\phi]D_t^\mb{n}(C_m[\phi])
+\paren{-\kappa_\Gamma C_m[\phi]^2\mb{n}+C_m[\phi]\nabla_\Gamma[\phi]+\mb{F}_p}\cdot \PD{\mb{X}}{t}}dm_\Gamma\\
=&-\D{}{t}\int_\Gamma \frac{1}{2}C_m[\phi]^2dm_\Gamma
-\int_\Gamma \paren{\frac{1}{2}\PD{C_m}{Q}\PD{Q}{t}[\phi]^2}dm_\Gamma\\
&+\int_\Gamma\paren{
\paren{\lambda-\frac{1}{2}C_m[\phi]^2}\kappa_\Gamma\mb{n}-
\nabla_\Gamma \paren{\lambda-\frac{1}{2}C_m[\phi]^2}}
\cdot \PD{\mb{X}}{t}dm_\Gamma
\end{split}
\end{equation}
where we used the \eqref{Fp}.
Since $\PD{Q}{t}=0$, the second boundary integral after the equality 
is $0$. Using \eqref{Qt} with $w=\lambda-\frac{1}{2}C_m[\phi]^2$ 
and $\PD{Q}{t}=0$, we see that the 
last boundary integral is also $0$.

In the absence of active currents $a_k$, the free energy is decreasing
since $j_k$ and $j_w$ satisfying conditions 
\eqref{jkcond} and \eqref{jwcond} respectively.  
\end{proof}

\section{Limiting Systems}\label{simple}

We now discuss some limiting cases of the model we introduced in the 
previous section. 
For this purpose, we shall first make the equations dimensionless.
In what follows, the primed symbols denote dimensionless variables.
We introduce the following non-dimensionalization of space and time.
\begin{equation}
\mb{x}=L\mb{x}',\; \mb{X}=L\mb{X}',\; t=T_Dt',\; T_D=\frac{L^2}{D_0},\; D_k=D_0D_k',
\end{equation}
where $L$ is the characteristic length scale (for example the 
size of the domain $\Omega_i$) and $D_0$ is the characteristic 
diffusion coefficients of ions. We thus measure time with respect 
to the diffusive time scale of ions. For concentrations 
and the electrostatic potential, we let:
\begin{equation}
c_k=c_0c_k',\; \phi=\frac{k_BT}{q}\phi'.
\end{equation}
For pressure and the membrane elastic force, we let:
\begin{equation}
p=c_0k_BTp',\; \mb{F}_\text{elas}=c_0k_BT\mb{F}_\text{elas}'.
\end{equation}
For the characteristic fluid and membrane velocity, we turn to 
relation \eqref{cont}. Let $\zeta$ be the characteristic hydraulic 
permeability of the membrane, which we may take as follows: 
\begin{equation}
\zeta=\at{\PD{j_w}{[\widehat{\psi_w}]}}
{[\widehat{\psi_w}]=0}.
\end{equation}
Then, $\zeta c_0k_BT$ is the characteristic 
velocity generated by an osmotic gradient across the membrane.
We thus let:
\begin{equation}
\mb{u}=\zeta c_0k_BT \mb{u}'.
\end{equation}
With the above dimensionless variables, we may rewrite our system as follows.
For simplicity, we shall adopt expression \eqref{ent} 
as our definition of the entropic part of the free energy $\omega$, 
so that:
\begin{equation}
\mu_k'=z_k\phi'+\ln c_k'.
\end{equation}
In $\Omega_i$ and $\Omega_e$, we have:
\begin{subequations}\label{dlessfull}
\begin{align}
\PD{c_k'}{t'}+\text{Pe}\nabla'\cdot (\mb{u}'c_k')&=-\nabla'\cdot \mb{f}_k',\;
\mb{f}_k'=-D_k'(\nabla'c_k'+z_kc_k'\nabla'\phi'),\label{dlessck}\\
-\nabla'\cdot(\beta^2\nabla'\phi')&=\sum_{k=1}^N z_kc_k',\label{dlesspoisson}\\
\gamma\Delta' \mb{u}'&=\nabla'p'+\paren{\sum_{k=1}^N z_kc_k'}\nabla \phi', \; 
\nabla'\cdot \mb{u}'=0,\label{dlessstokes}
\end{align}
where $\nabla', \nabla'\cdot$ and $\Delta'$ are the gradient, 
divergence and Laplace operators in the $\mb{x}'$ coordinate
and the dimensionless constants are:
\begin{equation}
\text{Pe}=\frac{\zeta c_0k_BT}{D/L},\; \beta=\frac{r_d}{L},\; 
r_d=\sqrt{\frac{\epsilon k_BT}{q^2c_0}},\; \gamma=\frac{\nu \zeta}{L}. 
\label{Pebeta}
\end{equation}
In the above, $\text{Pe}$ is the P\'eclet number which, in this case, measures 
the ratio between the fluid velocity induced by osmotic gradients and 
the characteristic diffusive velocity. The constant $\beta$ measures 
the ratio between $r_d$, the Debye length and $L$. The constant $\gamma$
is the ratio between the viscosity of water and the hydraulic resistance 
of the membrane. 
The boundary conditions at the membrane interface $\Gamma$ become:
\begin{align}
\at{\paren{\paren{c_k'\paren{\text{Pe}\mb{u}'-\PD{\mb{X}'}{t'}}+\mb{f}'_k}
\cdot \mb{n}}}{\Gamma_{i,e}}&=\alpha (j_k'+a_k'),\label{dlessckbc}\\
-\at{\paren{\beta\nabla'{\phi'}\cdot \mb{n}}}{\Gamma_{i,e}}&=
\theta C_m'\jump{\phi'},\label{dlessbcpoisson}\\
\mb{u}'-\frac{1}{\text{Pe}}\PD{\mb{X}'}{t'}&=j_w'\mb{n},\label{dlesscont}\\
\jump{\paren{\Sigma'_m(\mb{u}',p')+\beta^2\Sigma_e'(\phi')}\mb{n}}
&=\mb{F}'_\text{elas}+\beta\theta\mb{F}'_\text{cap}.\label{dlessforce}
\end{align}
\end{subequations}
In equation \eqref{dlessckbc}, $\alpha$ is a dimensionless constant 
given by the ratio of the characteristic membrane permeability $p_m$
and diffusion in the bulk:
\begin{equation}\label{alphapm}
\alpha=\frac{p_m}{D/L}, \; p_m=\sum_{k=1}^N\frac{k_BT}{c_0}\at{\PD{j_k}{\jump{\mu_k}}}
{\jump{\mu_k}=0}.
\end{equation}
The currents $j_k$ and $a_k$
are scaled so that $j_k=p_mc_0j_k'$ and $a_k=p_mc_0a_k'$.
In \eqref{dlessbcpoisson}:
\begin{equation}
C_m=C_m^0C_m', \; \theta=\frac{C_m^0 k_BT/q}{qc_0r_d},
\end{equation}
where $C_m^0$ is the characteristic magnitude of the membrane 
capacitance per unit area (see \eqref{Cm0} or \eqref{Cmscaling}).
The dimensionless constant $\theta$ is the ratio between the 
membrane charge and the total amount of charge in a layer of 
thickness on the order of the Debye length. In \eqref{dlesscont}, 
$j_w=\zeta c_0k_BT j_w'$. The variables in \eqref{dlessforce},
are defined by:
\begin{align}
\Sigma'_m(\mb{u}',p')&=\gamma(\nabla' \mb{u}'+(\nabla' \mb{u}')^T)-p'I,\\
\Sigma'_e(\phi')&=\nabla' \phi'\otimes \nabla \phi'
-\frac{1}{2}\abs{\nabla'\phi'}^2I,\\
\mb{F}_\text{cap}&=
\tau_\text{cap}'\kappa_\Gamma'\mb{n}-\nabla_\Gamma'\tau_\text{cap}', 
\; \tau_\text{cap}'
=\frac{1}{2}\paren{C_m'+Q\PD{C_m'}{Q}}\jump{\phi'}^2
\end{align}
where $\kappa_\Gamma'(=\kappa_\Gamma L)$ 
is the sum of the two principal curvatures of $\Gamma$ measured 
in the $\mb{x}'$ spatial variable and $\nabla_\Gamma'$ is the surface 
gradient operator with respect to $\mb{x}'$. 
Equations \eqref{dlessck}-\eqref{dlessstokes} and the boundary conditions 
\eqref{dlessckbc}-\eqref{dlessforce} constitute our dimensionless 
system. In the rest of this section we shall 
drop the $\prime$ in the variables with the understanding that 
all variables, unless otherwise stated, are dimensionless. 

The dimensionless system above possesses five dimensionless constants 
$\alpha, \beta, \gamma, \theta$ and $\text{Pe}$. We consider two 
limiting cases of the above system. First of all, consider the case 
when $\text{Pe}\ll 1$. Assuming all primed quantities are $\mathcal{O}(1)$
with respect to $\text{Pe}$, we see, from \eqref{dlesscont} that 
\begin{equation}
\PD{\mb{X}}{t}=\mathcal{O}(\text{Pe}).
\end{equation}
Therefore, the membrane does not move to leading order. 
If we collect all leading order terms, we see that the equations 
\eqref{dlessck} and \eqref{dlesspoisson} decouple 
from \eqref{dlessstokes}. We thus obtain the following Poisson-Nernst-Planck 
system with interface boundary conditions:
\begin{subequations}\label{Pe0}
\begin{align}
\PD{c_k}{t}&=-\nabla\cdot \mb{f}_k \text{ in } \Omega_{i,e}\\
-\nabla\cdot(\beta^2\nabla\phi)&=\sum_{k=1}^N z_kc_k 
\text{ in } \Omega_{i,e}\\
\at{(\mb{f}_k\cdot \mb{n})}{\Gamma_{i,e}}&=\alpha (j_k+a_k),\\
-\at{\paren{\beta\nabla{\phi}\cdot \mb{n}}}{\Gamma_{i,e}}&=
\theta C_m\jump{\phi},
\end{align}
\end{subequations}
where the membrane $\Gamma$ is fixed in time. This model was introduced in 
\cite{leonetti_biomembrane_1998} 
(see also \cite{schaff1997general,choi1999electrodiffusion} 
for related models).

For single cell systems, the P\'eclet number is about 
$\text{Pe}\approx 10^{-1}$ to $10^{-4}$. 
The above may thus be a good approximation to 
the full system in the $T_D$ time scale. In the context of 
multicellular systems, however, $L$ may be large and 
$\text{Pe}$ can reach unity, 
as can be seen from expression \eqref{Pebeta} of $\text{Pe}$.
It should be pointed out that there are situations in which the representative 
fluid velocity is not dictated by the osmotic pressure, in which 
case one should adopt a different definition for the P\'eclet number. 
For example, if we are interested in blood cells in a flow 
environment, the ambient hemodynamic flow velocity should be 
taken as the representative velocity. 

We note that \eqref{Pe0} also satisfies a free energy equality.
\begin{proposition}
Suppose $c_k$ and $\phi$ are smooth functions that satisfy \eqref{Pe0}.
Then, the following equality holds:
\begin{equation}
\D{}{t}\paren{G_S+E_\text{elec}}=-F_c-J_a.\label{PeFE}
\end{equation}
In the above, $G_S, E_\text{elec}, F_c$ and $J_a$ 
are dimensionless versions of the corresponding quantities 
in \eqref{FE}, \eqref{Fwc} and \eqref{FEE}.  
\end{proposition}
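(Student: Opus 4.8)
The plan is to specialize the argument of Theorem~\ref{main} to the simpler setting of \eqref{Pe0}, in which there is no fluid velocity, no membrane motion, and no mechanical or elastic energy. Concretely, I would test the concentration equation in \eqref{Pe0} against the (dimensionless) chemical potential $\mu_k=\sigma_k+z_k\phi$, $\sigma_k=\PD{\omega}{c_k}$ (only $\nabla\mu_k$, the jump $\jump{\mu_k}$, and the flux $\mb{f}_k=-D_kc_k\nabla\mu_k$ enter the identity, so any additive constant in $\mu_k$ is immaterial), integrate over $\Omega_i\cup\Omega_e$, and sum over $k$. On the left-hand side I would split $\sum_k\mu_k\,\PD{c_k}{t}$ into the entropic part $\sum_k\sigma_k\,\PD{c_k}{t}=\PD{\omega}{t}$ and the electrostatic part $\phi\,\PD{}{t}\paren{\sum_kz_kc_k}$. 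Since $\mb{u}=0$ and $\Gamma$ is stationary, $\int_{\Omega_i\cup\Omega_e}\PD{\omega}{t}\,d\mb{x}=\D{}{t}G_S$ with no advective or moving-boundary corrections, and by the Poisson equation in \eqref{Pe0} the electrostatic part equals $-\phi\,\nabla\cdot\paren{\beta^2\nabla\PD{\phi}{t}}$, which after one spatial integration by parts gives $\D{}{t}\int_{\Omega_i\cup\Omega_e}\tfrac{1}{2}\beta^2\abs{\nabla\phi}^2\,d\mb{x}$ plus a surface integral over $\Gamma$ of $\jump{\phi\,\PD{}{\mb{n}}\paren{\beta^2\PD{\phi}{t}}}$.

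On the right-hand side, integrating $-\sum_k\mu_k\,\nabla\cdot\mb{f}_k$ by parts produces the bulk dissipation $-\int_{\Omega_i\cup\Omega_e}\sum_kD_kc_k\abs{\nabla\mu_k}^2\,d\mb{x}$, which is the bulk part of $F_c$, together with the surface term $-\int_\Gamma\jump{\mu_k\,\mb{f}_k\cdot\mb{n}}\,dm_\Gamma$. Because $\mb{f}_k\cdot\mb{n}$ is continuous across $\Gamma$ and equals $\alpha(j_k+a_k)$ by the interface condition in \eqref{Pe0}, this surface term collapses to $-\int_\Gamma\alpha\jump{\mu_k}(j_k+a_k)\,dm_\Gamma$; summing over $k$ and recognizing the definitions inherited from \eqref{Fwc} and \eqref{FE}, this is exactly the surface contribution to $F_c$ plus $J_a$.

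The only step requiring genuine care is the electrostatic surface integral left over from the left-hand side. Here I would invoke the capacitance condition in \eqref{Pe0}, which makes $\beta^2\PD{\phi}{\mb{n}}$ continuous across $\Gamma$ and proportional to $C_m\jump{\phi}$ on each face, and then commute $\PD{}{\mb{n}}$ with $\PD{}{t}$. The key simplification over Theorem~\ref{main} is that, the membrane being fixed, the normal-transport derivative $D_t^\mb{n}$ of that proof reduces to an ordinary $\PD{}{t}$, all curvature terms vanish, and $Q\equiv1$ so by \eqref{CmQ} the capacitance $C_m$ is time-independent; consequently the surface integral is simply $\D{}{t}\int_\Gamma\tfrac{1}{2}\beta\theta C_m\jump{\phi}^2\,dm_\Gamma$, the surface part of $E_\text{elec}$. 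In particular neither Lemma~\ref{divongamma} nor Lemma~\ref{calcidentity} is needed, and there is no Stokes or elastic-energy bookkeeping at all. Assembling the two sides yields $\D{}{t}(G_S+E_\text{elec})=-F_c-J_a$, which is \eqref{PeFE}; the residual effort is purely the identification of the dimensionless constants $\alpha,\beta,\theta$ against the definitions of $F_c$, $J_a$, and $E_\text{elec}$.
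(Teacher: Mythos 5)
Your proposal is correct and is precisely the ``simple calculation'' the paper invokes without writing out: it specializes the proof of Theorem~\ref{main} to a stationary membrane with $\mb{u}=0$, so that all the moving-boundary machinery (Lemmas~\ref{divongamma} and~\ref{calcidentity}, the curvature and capacitive-force terms, and the Stokes bookkeeping) drops out and only the entropic, bulk electrostatic, and membrane capacitive terms survive. The one small imprecision is that a fixed membrane gives $\PD{Q}{t}=0$ rather than $Q\equiv 1$, but that is all you need for $C_m$ to be time-independent, so the argument stands as written.
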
 
\begin{proof}
This follows from a simple calculation.
\end{proof}
In this sense, system \eqref{Pe0} may be seen as the system 
associated with the energy law \eqref{PeFE} where the mechanical 
energy and dissipation in \eqref{FEE} is discarded.

We next consider the limit when $\beta\ll 1$. This limit is motivated 
by the fact that the Debye length $r_d$ is approximately $1$nm
in typical physiological systems, 
far smaller than the typical length scale of interest.
By formally letting $\beta \to 0$ in \eqref{dlessfull}
we obtain the following system of equations:
\begin{subequations}\label{beta0}
\begin{align}
\PD{c_k}{t}+\text{Pe}\mb{u}\cdot\nabla c_k&=-\nabla\cdot \mb{f}_k
\text{ in } \Omega_{i,e} \label{dlessckbeta0}\\
\quad \sum_{k=1}^N z_kc_k&=0 \text{ in } \Omega_{i,e},\label{dlessEN}\\
\gamma\Delta \mb{u}-\nabla p&=0, 
\quad \nabla\cdot \mb{u}=0 \text{ in } \Omega_{i,e},\label{beta0stokes}\\
\alpha (j_k+a_k)&=\at{\paren{\paren{c_k\paren{\text{Pe}\mb{u}-\PD{\mb{X}}{t}}+\mb{f}_k}
\cdot \mb{n}}}{\Gamma_{i,e}},\label{dlessbeta0ckbc}\\
\mb{u}-\frac{1}{\text{Pe}}\PD{\mb{X}}{t}&=j_w\mb{n},\quad
\jump{\Sigma_m(\mb{u},p)\mb{n}}=\mb{F}_\text{elas} \text{ on } \Gamma.
\label{beta0stokesbc}
\end{align}
\end{subequations}
We have discarded all terms in \eqref{dlessfull} that 
involve $\beta$ and have eliminated the boundary condition 
\eqref{dlessbcpoisson}. The most important feature 
of the above system is that we have, in place of the Poisson 
equation \eqref{dlesspoisson}, the electroneutrality condition
\eqref{dlessEN}.
The electrostatic potential $\phi$ thus evolves so that 
the electroneutrality constraint \eqref{dlessEN} is satisfied 
at each time instant. 
Although $\phi$ is thus determined only implicitly through 
the electroneutrality condition, it is possible to obtain 
a PDE satisfied by $\phi$ by taking the derivative of \eqref{dlessEN}
with respect to $t$ and using \eqref{dlessckbeta0}:
\begin{equation}\label{phieqbeta0}
\begin{split}
0&=\nabla \cdot(a\nabla \phi+\mb{b})\\
a&=\sum_{k=1}^N z_k^2D_kc_k, \; 
\mb{b}=\sum_{k=1}^N z_kD_k\nabla c_k.
\end{split}
\end{equation}
We point out that the electroneutrality 
condition does {\em not} imply that $\Delta \phi=0$ as 
may be erroneously inferred from \eqref{dlesspoisson}. 
In fact, as $\beta\to 0$, $\Delta \phi$ may remain order 
$1$ with respect to $\beta$ while the right hand side of 
\eqref{dlesspoisson} will go to $0$ like $\beta^2$.
This is a common fallacy in applications of the electroneutral limit. 
The boundary condition for this elliptic equation can be obtained 
by taking the sum in $k$ in boundary condition \eqref{dlessbeta0ckbc}:
\begin{equation}\label{phieqbeta0bc}
-\at{\paren{a\nabla \phi+\tilde{\mb{b}}}\cdot \mb{n}}{\Gamma_{i,e}}
=\sum_{k=1}^N\alpha z_k(j_k+a_k).
\end{equation}
Suppose
\begin{equation}\label{currdownhill}
\PD{}{\jump{\phi}}\sum_{k=1}^Nz_kj_k>0.
\end{equation}
The above inequality states 
that current flowing out of the cell should increase if $\jump{\phi}$
increases, and is thus satisfied 
by biophysically reasonable expressions for $j_k$. 
This inequality is clearly satisfied if $j_k$ are of the form 
\eqref{HH} or \eqref{GHK} (see also \eqref{monotone}). 
Condition \eqref{currdownhill} is necessary for  
the boundary value problem \eqref{phieqbeta0} and \eqref{phieqbeta0bc}
to be uniquely solvable (up to an arbitrary constant), assuming 
$a_k$ is a given function of $\mb{x}$ (and $t$).

In connection to \eqref{phieqbeta0} and \eqref{phieqbeta0bc}, we 
perform the following calculation to illuminate the 
nature of system \eqref{beta0} as it relates to \eqref{dlessfull}.
Suppose $\phi$ satisfies \eqref{dlessfull}. 
By taking the time derivative of \eqref{dlesspoisson} with respect to $t$, 
we obtain:
\begin{equation}\label{phieq}
\begin{split}
\nabla \cdot 
0&=\paren{\beta^2\nabla\PD{\phi}{t}+a\nabla \phi+\widetilde{\mb{b}}},\\
a&=\sum_{k=1}^N z_k^2D_kc_k, \; \widetilde{\mb{b}}=
\sum_{k=1}^N \paren{-\text{Pe}z_kc_k\mb{u}+z_kD_k\nabla c_k},
\end{split}
\end{equation}
We used \eqref{dlessck} in deriving the above.
At the boundary, we may use \eqref{dlessbcpoisson} and \eqref{dlessckbc} 
to find that:
\begin{equation}\label{phieqbc}
\begin{split}
-&\at{\paren{\beta^2\nabla\PD{\phi}{t}+a\nabla \phi+\widetilde{\mb{b}}}\cdot \mb{n}}
{\Gamma_{i,e}}\\
=&\beta\theta\PD{}{t}\paren{C_m\jump{\phi}}
+\sum_{k=1}^N \paren{z_kc_k\PD{\mb{X}}{t}\cdot\mb{n}+\alpha z_k(j_k+a_k)}.
\end{split}
\end{equation}
If we formally let $\beta\to 0$ in \eqref{phieq} and \eqref{phieqbc}, 
we obtain \eqref{phieqbeta0} and \eqref{phieqbeta0bc} respectively.
For the above limit to be justified, we must require that $\PD{\phi}{t}$
and $\PD{[\phi]}{t}$ remain order $1$ with respect to $\beta$ as 
$\beta\to 0$. It is thus only when the evolution of $\phi$ and $[\phi]$
is sufficiently slow that we can reliably use system \eqref{beta0} as 
as an approximation to \eqref{dlessfull}. 

We see from 
\eqref{phieq} and \eqref{phieqbc} that there are two other time 
scales in the system besides the diffusive time scale $T_D$. 
The first is the Debye time scale, $\beta^2T_D$. This is the 
relaxation time scale of deviations from electroneutrality. 
This Debye time scale is too small to be of physiological interest, 
and we may safely ignore the $\beta^2\PD{\phi}{t}$ terms 
except for the very short initial layer that may exist 
depending on initial conditions. The other time scale,
$\beta\theta T_D$, which we shall call the cable time scale, 
is the time scale over which the membrane 
potential $[\phi]$ can change. In excitable tissue, the
ionic currents $j_k$ can change on a time scale comparable 
to $\beta\theta T_D$. The interaction of rapid changes
in $j_k$ and the capacitive current $\beta\theta\PD{}{t}(C_m[\phi])$
term lead to cable effects, including the propagation of 
action potentials, which is essential in describing 
a wide range of electrophysiological behavior. 
Such phenomena are usually described by the cable model, 
in which the intracellular and extracellular media are treated 
as ohmic resistive media \cite{HH,KS}.

Setting $\beta\theta\PD{}{t}(C_m[\phi])$ term to $0$ could thus be 
problematic for certain applications. It is thus of interest 
to develop a model in which the term $\beta^2\PD{\phi}{t}$ is ignored 
but $\beta\theta\PD{}{t}(C_m[\phi])$ is not, while retaining 
electrodiffusive and osmotic effects contained in the full model. 
Such a model (without osmotic effects) is proposed in 
\cite{Sinica}
and was applied in \cite{mori_ephaptic_2008} to a problem in cardiology.
A key ingredient in the derivation of such a model is 
an analysis of a boundary layer that forms at the membrane interface $\Gamma$.
This boundary layer, in physical terms, correspond to 
charge accumulation on both sides of the membrane.
We refer the reader to \cite{morithesis,mori2009three} for this model and 
its relationship to conventional cable models.
 
An important feature of system \eqref{beta0} is that 
it satisfies the following energy equality.
\begin{proposition}
Suppose $c_k, \phi, \mb{u}$ and $p$ 
are smooth functions that satisy system \eqref{beta0}.
Then, the following equality holds:
\begin{equation}
\D{}{t}\paren{G_S+E_\text{elas}}=-I_p-J_p-J_a.\label{ENFE}
\end{equation}
In the above, $G_S, E_\text{elas}, I_p, J_p$ and $J_a$
are dimensionless versions of corresponding quantities 
in \eqref{FE} and \eqref{FEE}.
\end{proposition}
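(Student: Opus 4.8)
The plan is to run the proof of Theorem \ref{main} with all the electrostatic machinery stripped away, since the electroneutrality constraint \eqref{dlessEN} makes the potential $\phi$ drop out of both the bulk energy budget and the surface water term; in particular no analogue of $E_\text{elec}$ and no capacitive force appears, and the delicate surface calculus of Lemmas \ref{divongamma} and \ref{calcidentity} is not needed. I would first test \eqref{dlessckbeta0} against $\mu_k$ (given by \eqref{muk}, with $\sigma_k=\PD{\omega}{c_k}$ the entropic part), sum over $k$, and integrate over $\Omega_i\cup\Omega_e$. Writing $\mu_k=\sigma_k+z_k\phi$ and using $\nabla\cdot\mb{u}=0$,
\[
\sum_{k=1}^N \mu_k\paren{\PD{c_k}{t} + \text{Pe}\,\mb{u}\cdot\nabla c_k}
= \PD{\omega}{t} + \text{Pe}\,\nabla\cdot(\mb{u}\omega)
+ \phi\paren{\PD{}{t} + \text{Pe}\,\mb{u}\cdot\nabla}\paren{\sum_{k=1}^N z_kc_k},
\]
and the last term vanishes identically by \eqref{dlessEN}; this is the cancellation that eliminates the electrostatic free energy. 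Applying the Reynolds transport theorem on the moving domains $\Omega_{i,e}$ together with the slip condition in \eqref{beta0stokesbc} turns the left side into $\D{}{t}\int_{\Omega_i\cup\Omega_e}\omega\,d\mb{x}+\text{Pe}\int_\Gamma[\omega]j_w\,dm_\Gamma$. On the right side, integrating $\sum_k\mu_k\nabla\cdot(D_kc_k\nabla\mu_k)$ by parts (using $\mb{f}_k=-D_kc_k\nabla\mu_k$) produces the bulk dissipation $-\sum_k\int_{\Omega_i\cup\Omega_e}D_kc_k\abs{\nabla\mu_k}^2\,d\mb{x}$ and the surface term $\sum_k\int_\Gamma[\mu_k D_kc_k\nabla\mu_k\cdot\mb{n}]\,dm_\Gamma$, which I rewrite with the flux condition \eqref{dlessbeta0ckbc} and the slip condition as $\int_\Gamma\paren{\text{Pe}\,j_w\sum_k[\mu_kc_k]-\alpha\sum_k[\mu_k](j_k+a_k)}dm_\Gamma$. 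Electroneutrality is now used a second time, on each face of $\Gamma$: since $\sum_k\mu_kc_k=\sum_k\sigma_kc_k+\phi\sum_kz_kc_k=\sum_k\sigma_kc_k$, the combination $\text{Pe}\,j_w\paren{[\omega]-\sum_k[\mu_kc_k]}$ collapses to $\text{Pe}\,[\pi_w]j_w$ with $\pi_w=\omega-\sum_k c_k\sigma_k$ as in \eqref{muw}.

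Next I would carry out the mechanical estimate exactly as in the proof of Theorem \ref{mainc}: test the Stokes system \eqref{beta0stokes} against $\mb{u}$, integrate over $\Omega_{i,e}$ and integrate by parts using $\nabla\cdot\mb{u}=0$ and $\mb{u}=0$ on $\Gamma_\text{out}$, then combine the force balance in \eqref{beta0stokesbc}, the slip condition, and the elastic-energy identity \eqref{dtelas}. Because \eqref{beta0stokes} and \eqref{beta0stokesbc} carry no Maxwell stress and no capacitive force (both are $O(\beta)$ and were discarded in the limit), this step is purely as in the non-electrolyte case and yields $\D{}{t}E_\text{elas}$ up to the viscous dissipation $-\int_{\Omega_i\cup\Omega_e}\gamma\abs{\nabla\mb{u}}^2\,d\mb{x}$ and the surface term $\int_\Gamma[(\Sigma_m(\mb{u},p)\mb{n})\cdot\mb{n}]j_w\,dm_\Gamma$. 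Adding this to the concentration estimate, the osmotic surface term and this last term recombine through $[\widehat{\psi_w}]=[\pi_w]-[(\Sigma_m(\mb{u},p)\mb{n})\cdot\mb{n}]$ from \eqref{jwexp}, producing $-\int_\Gamma[\widehat{\psi_w}]j_w\,dm_\Gamma$; collecting the remaining terms gives \eqref{ENFE}.

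I do not anticipate a genuine obstacle: the computation is a strict simplification of the proof of Theorem \ref{main}, consistent with the authors' remark that it ``follows from a simple calculation.'' The only point requiring care is the bookkeeping of the dimensionless constants ($\text{Pe}$, $\alpha$, $\gamma$) introduced by the non-dimensionalization of Section \ref{simple} — they sit in the advection term, the slip condition, and hence in the dimensionless $I_p$, $J_p$, $J_a$ appearing in \eqref{ENFE} — and making sure \eqref{dlessEN} is invoked in both places it is needed: in the bulk, to collapse $\sum_k\mu_k\PD{c_k}{t}$ and $\sum_k\mu_k\mb{u}\cdot\nabla c_k$ to $\PD{\omega}{t}$ and $\nabla\cdot(\mb{u}\omega)$, and on each face of $\Gamma$, to collapse $\sum_k\mu_kc_k$ to $\sum_k\sigma_kc_k$. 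These two cancellations are precisely what account for the absence from \eqref{ENFE} of the electrostatic energy $E_\text{elec}$ and of the capacitive contributions present in \eqref{FEE}.
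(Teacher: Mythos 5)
Your proposal is correct and is exactly the route the paper intends: the authors' proof is the one-line remark that the identity ``follows from a calculation similar to the proof of Theorem \ref{mainc},'' and your writeup supplies the details of that calculation, correctly isolating the two places where electroneutrality must be invoked (to collapse $\sum_k\mu_k\partial_t c_k$ and $\sum_k\mu_k\mb{u}\cdot\nabla c_k$ to entropic terms in the bulk, and to reduce $\sum_k\mu_k c_k$ to $\sum_k\sigma_k c_k$ on each face of $\Gamma$ so that the van t'Hoff combination $[\pi_w]$ emerges). The only caveat, which you already flag, is the bookkeeping of the factors of $\text{Pe}$, $\alpha$, $\gamma$ in the dimensionless versions of $E_\text{elas}$, $I_p$, $J_p$, $J_a$.
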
 
\begin{proof}
The proof follows from a calculation similar to the proof of Theorem
\ref{mainc}.
\end{proof}
Thus, system \eqref{beta0} may be seen as the system associated with 
the energy principle \eqref{ENFE} in which the electrostatic energy 
in \eqref{FEE} is discarded.

\section{Numerical Simulation of Animal Cell Volume Control}\label{animal}

In this section, we take the problem of 
cell volume control to illustrate some aspects of the model we introduced 
above. Cells contain a large number of organic molecules that do
not leak out through membrane. This results in excess 
intracellular osmotic pressure, 
which may cause the cell to burst. Cells have developed 
countermeasures to prevent this from happening.

We shall use the 
electroneutral system \eqref{beta0} to study cell volume 
control. 
We continue to work with the dimensionless equations.
To simplify matters, we suppose that 
the cell membrane $\Gamma$ and the outer boundary 
$\Gamma_{\text{out}}=\partial \Omega$ are concentric spheres
for all time 
and that the velocity field $\mb{u}$ only has a radial 
component. Assuming the boundary condition  
$\mb{u}=\mb{0}$ on $\Gamma_\text{out}$ 
we immediately see that $\mb{u}=\mb{0}$ throughout $\Omega_i\cup \Omega_e$.
We can thus drop equation \eqref{beta0stokes} and set $\mb{u}=\mb{0}$
wherever $\mb{u}$ appears in system \eqref{beta0}.  
Assuming further that $c_k$ and $\phi$ are functions only 
of the (dimensionless) radial coordinate $r$, we have:
\begin{subequations}\label{radial}
\begin{align}
\PD{c_k}{t}&
=-\frac{1}{r^2}\PD{}{r}\paren{r^2f_k}, \; 
f_k=-\paren{D_k\PD{c_k}{r}+z_kc_k\PD{\phi}{r}},\label{radialck}\\
\sum_{k=1}^{N}z_kc_k&=0,
\end{align}
for $0<r<R$ and $R<r<R_\text{out}$ where $R(t)$ is the radius
of the membrane sphere $\Gamma$ and $R_\text{out}=\text{const}.$
is a the radius of the outer boundary sphere $\Gamma_\text{out}$.
The boundary conditions are:
\begin{align}
f_k=&\begin{cases}
0 &\text{ at } r=0,\\
c_k\PD{R}{t}+\alpha(j_k+a_k) &\text{ at } r=R\pm,\label{radialckbc}
\end{cases}\\
-\frac{1}{\text{Pe}}\PD{R}{t}&=j_w, \; \jump{p}=F_\text{elas}
 \text{ at } r=R,\label{radialflow}
\end{align} 
\end{subequations}
where $r=R\pm$ denote limiting values as $r$ approaches $R$ from 
above or below. Boundary conditions at $R=R_\text{out}$, will 
be specified later. The elastic force $F_\text{elas}$ can now 
be viewed as a scalar quantity since the force is only in the 
radial direction.

We now develop a numerical algorithm to simulate 
system \eqref{radial} and apply this 
to animal cell volume control as a demonstrative example.

We first
discuss the numerical algorithm used to simulate system \eqref{radial}.
Consider \eqref{radial} in the region $a<r<b$. 
First, suppose $b<R$ or $a>R$. Then, we have:
\begin{equation}
\D{}{t}\int_a^b r^2c_kdr= a^2f_k(a)-b^2f_k(b).\label{fkab}
\end{equation}
If we let $b=R(t)$ in the above, we must account for 
the fact that $R(t)$ is changing in time. Using 
\eqref{radialck} and \eqref{radialckbc}, we have:
\begin{equation}\label{fkmem}
\D{}{t}\int_a^{R(t)}r^2c_kdr=a^2f_k(a)-R^2(t)\alpha(j_k+a_k).
\end{equation}
A similar expression is true when $a=R(t)$.
The above conservation relations will be the basis for 
our discretization. 

Let $\Delta t$ be the time step, and let $R^n$ be the position 
of the membrane at $t=n\Delta t$.
We divide $0<r<R^n$ and $R^n<r<R_\text{out}$ into $N_v$ equal segments.
Let
\begin{equation}
r_l^n=
\begin{cases}
\frac{kR^n}{N_v}, &\text{ if } 0\leq l\leq N_v\\
R^n+\frac{R_\text{out}-R^n}{N_v}, &\text{ if } N_v+1\leq l\leq 2N_v.
\end{cases}
\end{equation}
The $l$-th segment is given by $r_{l-1}<r<r_l$.
Of the $2N_v$ segments, segments $1\leq l\leq N_v$ are 
in the interior of the cell, whereas the the rest are in the 
exterior of the cell. 
In each segment, we have the concentrations 
$c_{k,l}^n$
and the electrostatic potential $\phi_l^n$.

Suppose we are to advance from time $(n-1)\Delta t$ to $n\Delta t$.
We use a splitting scheme.
Each time step is divided into two substeps. 
In the first substep, we advance membrane position:
\begin{equation}
R^n=R^{n-1}-\text{Pe}j_{w}^{n-1}\Delta t.
\end{equation}
In evaluating $j_w$, we need the osmotic pressure as 
well as the elastic force $F_\text{elas}$, both of which 
are evaluated using at time $(n-1)\Delta t$. For concentrations 
of ions at the intracellular and extracellular sides of the 
membrane, we use $c_{k,N_v}^{n-1}$ and $c_{k,N_v+1}^{n-1}$ respectively.

In the second substep, we update the concentrations and compute the 
electrostatic potential. We use one step of a backward Euler discretization.
We first describe our discretization for the intracellular region. 
Define:
\begin{equation}\label{ckinr}
c_{k,i}^n(r)=
\begin{cases}
c_{k,l}^n &\text{ if } r_{l-1}^n\leq r<r_l^n, \\
0 &\text{ if } r\geq r_{N_v}^n=R^n.
\end{cases}
\end{equation}
Suppose first that $R^n\leq R^{n-1}$. 
For $1\leq l\leq N_v-1$, we discretize \eqref{fkab} 
to obtain an equation for $c_{k,l}^n$:
\begin{equation}
\begin{split}
\frac{4\pi}{3}((r_l^n)^3-(r_{l-1}^n)^3)c_{k,l}^n&=
\int_{r_{l-1}^n}^{r_l^n} 4\pi r^2c_{k,i}^{n-1}(r)dr\\
&+4\pi \paren{(r_{l-1}^n)^2f_{k,l-1}^n-(r_l^n)^2f_{k,l}^n}\Delta t.\label{ckln}
\end{split}
\end{equation}
where $f_{k,l}^n$ is set to $0$ for $l=0$ and 
\begin{equation}
f_{k,l}^n=
-D_k\paren{\frac{c_{k,l}^n-c_{k,l-1}^n}{\Delta x_i}
+\frac{c_{k,l}^n+c_{k,l}^n}{2}\frac{\phi_{k,l}^n-\phi_{k,l-1}^n}{\Delta x_i}},
\text{ for } 1\leq l\leq N_v-1,
\end{equation}
where $\Delta x_i=R^n/N_v$. 
Note that the integral in \eqref{ckln} can be evaluated exactly 
given expression \eqref{ckinr}.
As for segment $l=N_v$, we view the endpoint $r_{N_v}^n=R^n$
as having evolved from $R^{n-1}$, and thus discretize \eqref{fkmem}.
We have:
\begin{equation}\label{ckmem}
\begin{split}
\frac{4\pi}{3}((r_{N_v-1}^n)^3-(R^n)^3)c_{k,N_v}^n&=
\int_{r_{N_v-1}^n}^{R^{n-1}} 4\pi r^2c_{k,i}^{n-1}(r)dr\\
&+4\pi \paren{(r_{N_v-1}^n)^2f_{k,N_v-1}^n-(R^n)^2\alpha(a_k^n+j_k^n)}\Delta t.
\end{split}
\end{equation}
The important point here is that the upper end point of the above 
integral is $R^{n-1}$ and not $R^n$.
The total membrane fluxes $(R^n)^2a_k^n$ and $(R^n)^2j_k^n$ 
are evaluated at time $n\Delta t$, 
and are thus functions of $c_{k,N_v}^n, c_{k,N_v+1}^n$ and 
$[\phi]^n=\phi_{N_v}^n-\phi_{N_v+1}^{n}$. 

If $R^n>R^{n-1}$, the discretized equations are  
the same as \eqref{ckln} and \eqref{ckmem} 
except that in \eqref{ckmem} the upper 
endpoint of the integral in is $R^n$ instead of $R^{n-1}$.
The fact that the endpoint of the integral is time-dependent in 
\eqref{fkmem} is taken into account by the $0$ extension of 
$c_{k,i}^{n-1}(r)$ when $r\geq R^{n-1}$ (see \eqref{ckinr}).  

The final equation we impose is that electroneutrality be satisfied 
in each segment:
\begin{equation}\label{discEN}
\sum_{k=1}^N z_kc_{k,l}^n=0 \text{ for all } l.
\end{equation}

For the extracellular segments $N_v+1\leq l\leq 2N_v$, we essentially use 
the same discretization as in the intracellular segments. 
The only difference is in treating boundary conditions at the 
$l=2N_v$ segment. We impose either no-flux or Dirichlet boundary 
conditions. For no-flux boundary conditions, we simply let 
$f_{k,N_v}^n=0$ in \eqref{ckln} for $l=2N_v$. Suppose the 
Dirichlet boundary conditions are given by:
\begin{equation}
c_k(R_\text{out},t)=c_{k,e}.
\end{equation}
In this case, we set:
\begin{equation}\label{discbc}
c_{k,e}^n=c_{k,e}, \; c_{k,e}^n
=\frac{3}{2}c_{k,2N_v}^n-\frac{1}{2}c_{k,2N_v-1}^n.
\end{equation}
For either boundary condition, the electrostatic potential 
is determined only up to an additive constant, and 
we thus set $\phi_e^n=3\phi_{2N_v}^n/2-\phi_{2N_v-1}^n/2=0$.

For the second substep, we thus have equations \eqref{ckln},
\eqref{ckmem} and \eqref{discEN} with suitable boundary conditions 
at $r^n_{2N_v}=R_\text{out}$, which we must solve for $c_{k,l}^n$
and $\phi_l^n$. This system of nonlinear algebraic equations
is solved using a Newton iteration where the Jacobian matrix 
is computed analytically. In all simulations reported here, 
we obtained convergence to within a relative tolerance of $10^{-12}$
within less than $4$ iterations. In particular, the electroneutrality 
condition at each time step was satisfied at each point to within 
$6\times 10^{-14}$mmol/$\ell$ for all simulation results shown below. 

Note that the discretization is conservative. For example, we have:
\begin{equation}
\sum_{l=1}^{2N_v} \frac{4\pi}{3}((r_l^n)^3-(r_{l-1}^n)^3)c_{k,l}^n=\text{const}
\end{equation}
so long as we impose the no-flux boundary condition at $r=R^{\text{out}}$.
We have checked this property numerically, we achieve conservation 
of ions to $14$ to $15$ digits. 
This property is very important in studying long time behavior. 

We would also like to comment on our use of the 
backward Euler scheme and the Newton iteration in the second 
substep of each time step. Rather than use a backward Euler step, 
we may split the second substep further 
into two substeps. In the first substep, one compute 
the updates of $\phi$ given values of $c_k$ at time $(n-1)\Delta t$
and in the second substep, we update $c_k$ using the 
updated $\phi$. 
A variant of this scheme is to use the above as one step 
of a fixed point iteration to solve the backward Euler problem.
An advantage of these schemes is that  
the associated matrix problem is much simpler and smaller 
than that of a full Newton iteration we use in this paper. 
This was indeed the 
first algorithm we used in our attempt to simulate the system. 
This algorithm, however, turned out to have serious stability 
and convergence issues and led to large pile-up of charges close to the 
membrane. This difficulty was clearly caused by the moving membrane.
Indeed, a similar algorithm was successfully used 
in \cite{CAMCoS} to simulate a similar 
but higher dimensional system, 
in which the membrane was stationary. We also found that if 
$\Delta t$ or the membrane velocity is very small, the 
fixed-point algorithm does produce computational results 
in agreement with those obtained using a Newton iteration.  
We do point out that even the backward Euler, Newton scheme, 
that we use here was not unconditionally stable, though the 
time step restriction was never serious. 
A more stable algorithm may be possible by 
developing a scheme in which the membrane position and 
concentrations (and electrostatic potential) are computed 
simultaneously. 

We now describe the model example we simulate.
The cell membrane of animal cells is not mechanically strong enough to resist 
osmotic pressure due to the presence of organic solutes 
in the cell. Cell volume control is achieved by actively maintaining 
a concentration gradient of ions across the cell membrane. 
Many modeling studies have been performed to study cell volume 
control in animal cells. To the best of our knowledge, all 
such studies use ODE systems in which the cellular and extracellular 
concentrations are assumed to have no spatial variation
\cite{KS,hoppensteadt2002modeling,tosteson1960regulation,tosteson1964regulation,jakobsson1980interactions}.
The novelty here is that we  
use the PDE system \eqref{radial}, a field theory,
to study cell volume control.
 
We consider 
a generic spherical animal cell whose sodium and potassium 
concentration differences across the membrane is maintained 
by the presence of the Na-K ATPase. 
Henceforth, we shall use variables with their original 
dimensions, since we will be dealing with a concrete 
biophysical setup.
We consider four 
species of ion, Na$^+$, K$^+$, Cl$^-$ and the organic 
anions, which we index as $k=1,\cdots, 4$ in this order. 
The diffusion coefficients of the four species is 
given in the Table \ref{ionparams}. We make the simplification that  
the organic anions are a homogeneous species with a single diffusion 
coefficient. The diffusion coefficient for 
the organic anion is somewhat arbitrary, one order of 
magnitude smaller than the small inorganic ions. 

We take the initial radius of the spherical cell to be 
$R_0$. We let the outer edge of the simulation domain 
$R_\text{out}=2R_0$.  
We assume that the membrane does not generate any mechanical 
force, so that $F_\text{elas}=0$. 
Passive water flow across the membrane 
is proportional to the water chemical potential. 
Given that $F_\text{elas}=0$, water 
flow across the cell membrane is driven 
osmotic pressure difference across the membrane:
\begin{equation}
j_w=\zeta N_Ak_BT\sum_{k=1}^4 \jump{c_k}
\end{equation}
where $N_A$ is the Avogadro constant (so that $N_Ak_B$ is the ideal 
gas constant), $T$ is the absolute temperature, 
$c_k$ is measured in mmol$/\ell$ 
and $\zeta$ is measured in velocity per pressure. 

For the passive membrane flux $j_k$, we take expression 
\eqref{GHK}:
\begin{equation}
j_k=\frac{R_0^2}{R(t)^2}P_kz_k\phi'\paren{
\frac{\at{c_k}{R-}\exp(z_k\phi')-\at{c_k}{R+}}
{\exp(z_k\phi')-1}}, \; \phi'=\frac{q\jump{\phi}}{k_BT}.
\end{equation}
where the subscript $R-$ and $R+$ denote evaluation at the 
inner and outer faces of the membrane. 
This choice is standard for cell volume studies 
\cite{strieter_volume-activated_1990,lew_behaviour_1979}.
The number $P_k$ is measured in cm/sec and is the permeability 
of a unit area of membrane for ionic species $k$ when the radius 
of the cell is $R_0$. Assuming that this permeability is 
determined by the presence of ionic channels and 
that the number of ionic channels remains constant, $j_k$
must be made inversely proportional to the membrane area.  
For sodium, potassium and chloride, $P_k$ is positive 
but we set the permeability for organic solutes to $0$.

We follow \cite{lew_behaviour_1979} to use the following expression for 
the Na-K ATPase flux:
\begin{equation}
a_1=A_p\paren{\frac{\at{c_1}{R-}}{\at{c_1}{R-}+K_{Na}}}^3
\paren{\frac{\at{c_2}{R+}}{\at{c_2}{R+}+K_K}}^2, \; a_2=-\frac{2}{3}a_1.
\end{equation}
Recall here that $a_1, c_1$ are the active Na$^+$
flux and concentration respectively and $a_2, c_2$ are the 
active K$^+$ flux and concentration respectively. 
The exponents of $3, 2$ and 
the factor of $-2/3$ reflects the $3:2$ stoichiometry of the 
NaK ATPase in pumping Na$^+$ out and K$^+$ into the cell. 
The constants $K_{Na}$ and $K_k$ are given in Table \ref{miscparams}.

All constants and initial conditions are given in Tables \ref{miscparams}
and \ref{ionparams}.
Initial concentrations are assumed spatially uniform. 
The constants that are not listed in the tables are computed 
so that the initial state is a stationary state under no-flux 
boundary conditions at $R=R_\text{out}$. 
This is similar to what is done in \cite{lew_behaviour_1979}.
This procedure determines the initial intracellular 
Cl$^-$ concentration, Na-K ATPase maximal pump rate $A_p$, 
K$^+$ permeability $p_2$, initial intracellular 
organic solute concentration,
and the organic solute valence $z_4$. We point out 
that $\jump{\phi}^\text{init}$, the initial value of the 
membrane potential is only needed to compute the 
initial conditions. Once all the concentrations are 
known, the concentrations serve as the initial conditions 
and there is no need to know $[\phi]$ at the initial 
time to evolve the system forward. 

\begin{table}
\begin{center}
\begin{tabular}{|c|c||c|c|}
\hline
$T$ (K)& $273.15+37$ & $K_k$ (mmol/$\ell$)& $0.75$ \cite{lew_behaviour_1979}\\
$\zeta$ (cm/s/mPa) & $5.2507\times 10^{-13}$ \cite{strieter_volume-activated_1990}
&$K_{Na}$ (mmol/$\ell)$ & $3.5$ \cite{lew_behaviour_1979}\\
$R_0$ (mm)& $0.5$ & $A_p$ (cm/s) & -\\
$R_\text{out}$ (mm)& $1$ & $\jump{\phi}^\text{init}$ (mV)& $-70$ \\
\hline
\end{tabular}
\end{center}
\caption{Constants used in the numerical simulation. 
$\jump{\phi}^\text{init}$ is the initial membrane voltage. 
Symbols labeled 
with '-' are determined so that the initial 
condition is a stationary state (see main text). 
The ion related 
constants are listed in Table \ref{ionparams}.}
\label{miscparams}
\end{table}

\begin{table}
\begin{center}
\begin{tabular}{|c||c|c|c|c|c|}
\hline
      & $z_k$ & $D_k$ (cm$^2$/s) & $P_k$ (cm/s)
& $c_{k,\text{int}}^\text{init}$ & $c_{k,\text{ext}}^\text{init}$\\
\hline
Na$^+$&$+1$&$1.33\times 10^{-5}$\cite{koch_biophysics_1999}&$1.0\times 10^{-7}$\cite{hernandez_modeling_1998}&$10$&$145$\\
K$^+$ &$+1$&$1.96\times 10^{-5}$\cite{koch_biophysics_1999}&  -                  &$140$&$5$\\
Cl$^-$&$-1$&$2.03\times 10^{-5}$\cite{koch_biophysics_1999}&$1.0\times 10^{-7}$\cite{hernandez_modeling_1998}&-&$150$ \\
O.A.  &-& $1.0\times 10^{-6}$&  0                  &-&$0$\\
\hline
\end{tabular}
\end{center}
\caption{Parameters related to ionic concentrations. O.A. stands 
for organic anions.  
The initial intracellular and extracellular concentrations are 
given by $c_{k,\text{int}}^\text{init}$ and $c_{k,\text{ext}}^\text{init}$
respectively (listed here in mmol/$\ell$). Symbols labeled 
with '-' are determined so that the initial 
condition is a stationary state (see main text). Other parameters 
are listed in Table \ref{miscparams}.}
\label{ionparams}
\end{table}

In the simulations to follow, we took $N_v=100$ and the time step 
$\Delta t=500$ms.

We perform the following numerical experiments. 
Starting with the initial conditions specified above 
with no-flux boundary conditions, we set the following 
Dirichlet boundary conditions for $t\geq 10$s:
\begin{equation}
c_{1,e}=100, \; c_{2,e}=50, \; c_{3,e}=150,\label{hKcb}
\end{equation}
where the units are in mmol/$\ell$. The boundary concentrations 
are thus isotonic with the initial concentrations, but 
the extracellular K$^+$ concentration is now increased 
$10$-fold. Such a stimulus should lead to immediate depolarization 
together with expansion of the cell. 

The computational results are given in Figure \ref{highK}.
What is interesting here is that there is a transient 
drop in the cell radius, followed by an expected gradual 
increase. This transient drop is due to the following. 
After a sudden change in the boundary condition,  
Na$^+$ ions diffuse out whereas K$^+$ ions should diffuse 
in from $R=R_\text{out}$. Since K$^+$ diffuses faster 
than Na$^+$, there is a transient increase in total 
ionic concentration near the membrane, leading to 
excess osmotic pressure immediately outside the cell
compared to the inside. This gives rise to a transient 
drop in the cell radius. However, as the ionic concentration 
becomes spatially uniform within the extracellular and intracellular 
domains, the cell starts to expand. 

\begin{figure}
\begin{centering}
\includegraphics[width=\textwidth]{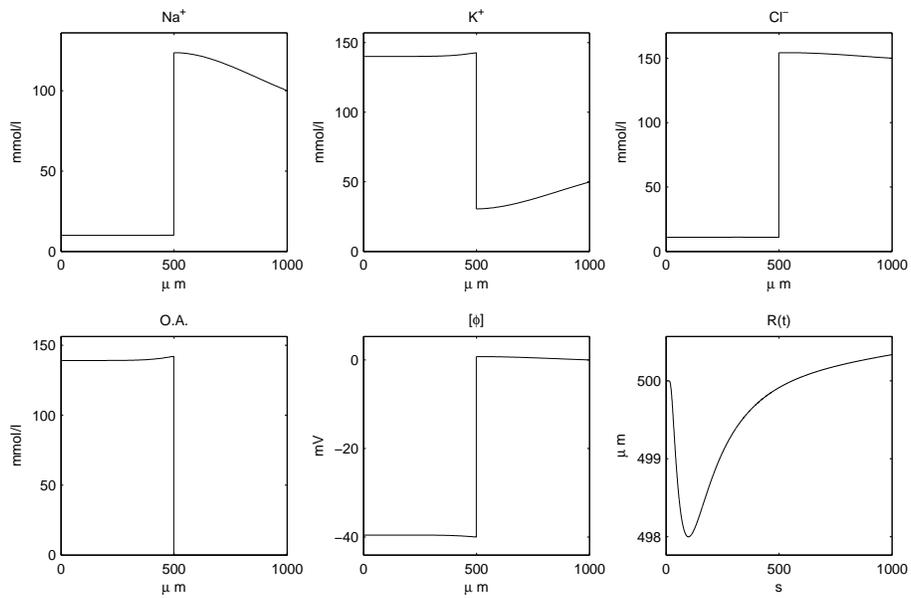}
\end{centering}
\caption{Computational results under a high K$^+$ stimulus
(see \eqref{hKcb}). 
The first five figures are the snapshots of the ionic concentrations 
and the electrostatic potential at $t=50$s. The horizontal axis 
represents the radius $r$. The last figure plots the cell radius $R(t)$
as a function of time.}
\label{highK}
\end{figure}

The next computational results describe a hypotonic 
shock. We set the boundary conditions to the following 
for $t\geq 10$s:
\begin{equation}
c_{1,e}=100, \; c_{2,e}=5, \; c_{3,e}=105, \label{hypcb}
\end{equation}
where the concentrations are in mmol/$\ell$. 
A snapshot of the computational results are given in Figure \ref{hypotonic}.
The cell expands due to the hypotonic shock but tends to 
a new stationary state with time. 

\begin{figure}
\begin{centering}
\includegraphics[width=\textwidth]{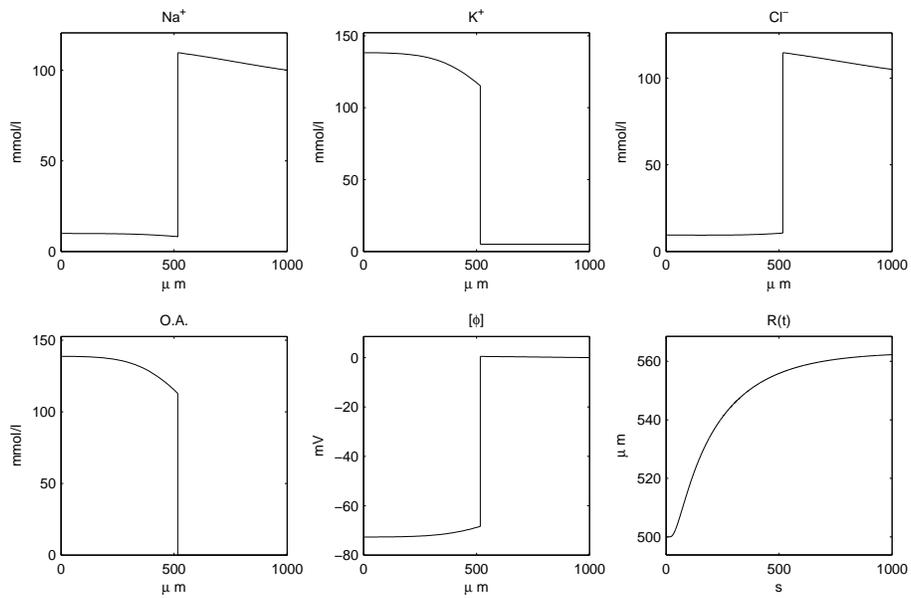}
\end{centering}
\caption{Computational results under hypotonic stimulus (see \eqref{hypcb}). 
The first five figures are the snapshots of the ionic concentrations 
and the electrostatic potential at $t=100$s. The horizontal axis 
represents the radius $r$. The last figure plots the cell radius $R(t)$
as a function of time.}
\label{hypotonic}
\end{figure}

\section{Conclusion}\label{conclusion}

We introduced a PDE system of electrodiffusion and 
osmotic water flow in the presence of deformable capacitance-carrying 
membranes. The salient feature of the model is that it satisfies 
an energy equality, and thus possesses a natural thermodynamic 
structure. We discussed simplifications of the model and 
applied the electroneutral limit to the problem of cell volume 
control.

In the proof of Theorem \ref{mainc}, we showed 
that the van t'Hoff expression 
for osmotic pressure arises naturally, simply through an integration by 
parts argument. This observation seems to be new. It is 
interesting that, in expression \eqref{FE}, the mechanical 
pressure $p$ and osmotic pressure $\pi_w$ only appear 
in the combination $\psi_w=p+\pi_w$. 
This is consistent with experimental results indicating that
the effect of osmotic pressure on 
transmembrane water flow is indistinguishable from 
that of mechanical pressure \cite{finkelstein1987water}.

The models introduced here are {\em sharp interface} models 
in the sense that the membrane is treated as a surface 
without thickness and the physical quantities of interest 
are allowed to have discontinuities across $\Gamma$. 
This is in contrast to {\em diffuse interface} models in 
which the membrane has some small but finite thickness 
and the physical quantities transition rapidly but smoothly 
across the interface.
It should be possible to obtain at least parts of the model by 
taking the thin interface limit of an appropriate 
diffuse interface (or finite thickness) model. 
This may lead to a simpler verification of the 
energy identity of Theorem \ref{main}. Establishing 
such a connection may also help in understanding the physical nature 
of the capacitive force \eqref{FcapCmQ}. The calculations 
in Appendix \ref{capforce} may be seen as an initial 
step in establishing this relationship.

Given the natural thermodynamic structure of the problem, 
it is almost certainly the case that our model has a {\em variational}
structure. A variational principle for dissipative systems 
dates back to \cite{onsager1931reciprocal}. This procedure 
has been used successfully in deriving dynamic equations for 
soft matter systems \cite{doi1988theory,doi2009gel}.
In \cite{eisenberg2010energy,hyon2010energetic}, a 
model for non-ideal electrolyte solutions is derived
by combining, in the spirit of Rayleigh (see \cite{goldstein1980classical}), 
the principle of least action with the above variational 
principle for dissipative system.

The energy identities introduced here provide a natural 
apriori estimate for our PDE system. For related systems, 
the corresponding 
free energy identity has been used successfully to prove 
stability of steady states \cite{biler2000long,ryham2009existence}. 
It would be interesting to 
see if similar analytical results can be obtained for the 
system proposed here. 

We hope that our model has wide-ranging applications in 
cellular physiology. 
In principle, our model is applicable to 
most problems of classical physiology 
\cite{davson1970textbook,boron2008medical,pappenheimer1987silver}.
As we saw in Section \ref{simple}, our 
model admits simplifications when certain dimensionless 
parameters are small. In the short time scale when water movement 
is not significant, the system is reduced to the
Poisson-Nernst-Planck model with interface boundary conditions.
This and related models have been successfully applied in \cite{
leonetti_biomembrane_1998,leonetti_pattern_2004,mori_ephaptic_2008}. 
If the physiological processes of interest are slow and happen 
over a long time scale, the electroneutral limit may be taken. 
This was applied to the problem of cell volume control in 
Section \ref{animal} of this paper. 

Any serious application of our model will require the development 
of an efficient numerical algorithm. 
The electrodiffusive part of the problem with stationary membranes 
(without fluid flow)
has been treated successfully in \cite{CAMCoS, brera2010conservative}
in a two-dimensional setting. 
In the model presented here, the membrane interface is dynamic.
We must therefore solve an electrodiffusive problem in a domain 
with a moving interface across which physical quantities 
experience discontinuities. 
We have presented successful computations in one-dimension for 
the electroneutral limit in Section \ref{animal}, 
but simulations are bound to be more challenging in higher dimension.
If a regular mesh is to be used, 
immersed boundary or immersed interface schemes could be a major 
component of the algorithm 
\cite{layton2006modeling,ibmethod,li_immersed_2006}.

Many physiological phenomena in which both electrodiffusion and osmosis 
play an important role take place over spatial scales of whole tissues
or organs rather than the cellular spatial scale we focused on in this 
paper. Such systems include ocular fluid circulation, electrolyte 
regulation in the kidney or brain ionic homeostasis. 
For such systems, it is important to develop an appropriate 
homogenized model. In the context of cable models, this is known 
as the bidomain model, and has found great utility in many 
contexts, especially in cardiac electrophysiology 
\cite{KS,eisenberg_three-dimensional_1970,eisenberg1979electrical,
mathias1979electrical,neu1993hst}. We shall report on a such 
a multidomain model in a future publication.

\appendix
\section{Appendix}

\subsection{Physical Interpretation of the Capacitive Force}\label{capforce}

Let the membrane be made of an incompressible material. 
In this case, we argued that $C_m$ should satisfy \eqref{Cmscaling}.
We shall show that the incompressibility of 
the material implies $\tau_\text{cap}=-C_m[\phi]^2$, in 
agreement with \eqref{FcapCmQ}. To this end, we take
a membrane of finite thickness $d$ and consider the limit as 
the thickness tends to $0$. Let 
$\Gamma$ be the midplane of this membrane of finite thickness. 
The membrane thus coincides with $\Gamma$ as $d\to 0$.

Take a point $\mb{x}\in\Gamma$ and let $\mb{n}$ be the unit normal 
and $d$ the thickness of the membrane at $\mb{x}$. 
The stress inside the membrane is given by:
\begin{equation}\label{sigmamem}
\Sigma^\text{mem}=\Sigma_e^\text{mem}
-p^\text{mem}I
\end{equation}
where $\Sigma_e^\text{mem}$ is the Maxwell stress,
$\Sigma_\text{elas}^\text{mem}$ the elastic stress $p^\text{mem} I$
is the isotropic pressure term that enforces incompressibility
of the material. We have made the simplification that the material 
can only generate isotropic stresses.  

Let us now consider the limiting behavior of 
this stress when $d$ is very small.
To leading order in $d$, 
the Maxwell stress tensor inside the membrane is given by:
\begin{equation}\label{sigmaemem}
\Sigma_e^{\text{mem}}=
\epsilon_m\frac{[\phi]^2}{d^2}\paren{\mb{n}\otimes\mb{n}-\frac{1}{2}I}
=C_m\frac{[\phi]^2}{d}\paren{\mb{n}\otimes\mb{n}-\frac{1}{2}I}
\end{equation}
where $\epsilon_m$ is the dielectric constant of the membrane.
We assumed that the electric field inside the membrane 
is given by $[\phi]\mb{n}/d$, given that the membrane is very thin.
We used $\epsilon_m/d=C_m$ in the second equality.

Now, let us consider stress balance at $\mb{x}+\frac{d}{2}\mb{n}$, 
the point where the membrane touches $\Omega_e$. Here, we have 
the following stress balance condition:
\begin{equation}
\Sigma^\text{mem}\mb{n}=\Sigma^{\Omega_e}\mb{n}
\end{equation}
where $\Sigma^{\Omega_e}$ is the stress in $\Omega_e$. 
Using \eqref{sigmamem} and \eqref{sigmaemem}
$\Sigma^\text{mem}\mb{n}$, to leading order in $d$, can be written as:
\begin{equation}\label{sigmamemexp}
\Sigma^\text{mem}\mb{n}=
\paren{C_m\frac{[\phi]^2}{2d}-p^\text{mem}}\mb{n}.
\end{equation}
As $d\to 0$, 
$\Sigma^{\Omega_e}\mb{n}$ must remain finite if there is a 
finite distinguished limit. Therefore, $\Sigma^\text{mem}\mb{n}$
must remain order $1$ with respect to $d$. 
In \eqref{sigmamemexp}, the term $C_m\frac{[\phi]^2}{2d}$
grows like $1/d$ as $d\to 0$. The elastic stress stays order $1$
with respect to $d$. Therefore, $p^\text{mem}\mb{n}$, must satisfy:
\begin{equation}
p^\text{mem}=C_m\frac{[\phi]^2}{2d}
\end{equation} 
to leading order in $d$.

Take any unit vector $\mb{t}$ tangent to $\Gamma$ at $\mb{x}$. We have:
\begin{equation}
\Sigma^{\text{mem}}\mb{t}=-\frac{1}{d}C_m[\phi]^2\mb{t},
\end{equation}
to leading order in $d$.
Multiplying the above by the thickness $d$ of the membrane, and 
taking the limit as $d\to 0$, we conclude
that a ``surface tension'' of magnitude 
$-C_m[\phi]^2$ is generated at the membrane. 

The above derivation suggests the following physical interpretation
of expression \eqref{FcapCmQ}. The term $\frac{1}{2}C_m[\phi]^2$ 
comes directly from the Maxwell stress. More simply put, 
this tension comes from large coulomb forces squeezing the thin membrane. 
This force must be counter balanced to maintain the mechanical 
integrity of the membrane, which is given by an isotropic 
pressure in the case of incompressible materials. 
This contributes the term 
$\frac{1}{2}Q\PD{C_m}{Q}[\phi]^2$ to the capacitive force.

\subsection{Relation to an Immersed Boundary Model of Osmosis}\label{leeatz}

In this appendix, we explore a model of osmosis proposed in 
\cite{lee2008immersed,atzberger2009microfluidic}
and compare this with the model presented in this paper.

Consider the system of equations introduced in Section \ref{diffosm}.
We shall take $\omega=\omega_0$ given in \eqref{ent}.
Let there be only one solute species and suppose that it is 
impermeable to the membrane. Let $c$ be the concentration of this 
solute. We have:
\begin{subequations}\label{appsys}
\begin{align}\label{appeq}
\PD{c}{t}+\nabla \cdot(\mb{u}c)&=\nabla \cdot \paren{D\paren{\nabla c}},\\
\nu \Delta \mb{u}-\nabla p&=0,
\end{align}
and at the boundary, we have:
\begin{equation}\label{appbc}
\jump{\Sigma(\mb{u},p)}=\mb{F}_\text{elas}, \; 
\mb{u}c-D\PD{c}{\mb{n}}=c\PD{\mb{X}}{t}\cdot \mb{n}
\end{equation}
\end{subequations}
and \eqref{cont}. We impose no-flux and no-flow boundary conditions 
on $\partial \Omega$. 
We know from Theorem \ref{mainc} that the solutions to 
\eqref{appsys} satisfy:
\begin{equation}
\begin{split}\label{appen}
&\D{}{t}\int_\Omega \paren{k_BT c\ln c+E_\text{elas}}d\mb{x}\\
=&-\int_\Omega Dc\abs{\nabla \ln c}^2d\mb{x}
+\int_\Gamma \paren{\mb{F}_\text{elas}\cdot \mb{n}+k_BT[c]}j_wdm_\Gamma.
\end{split}
\end{equation}
Dissipation in the free energy will be guaranteed if, for example, 
$j_w$ is a linear function of 
$\mb{F}_\text{elas}\cdot \mb{n}+k_BT[c]$ of negative slope.

In \cite{lee2008immersed,atzberger2009microfluidic}, the authors propose the following immersed boundary model 
of osmosis. 
We argue why this may be considered a regularization 
of the above model.
The concentration $c$ satisfies:
\begin{subequations}\label{sys_pl}
\begin{equation}\label{addiff_pl}
\PD{c}{t}+\nabla \cdot(\mb{u}c)=
\nabla \cdot \paren{D\paren{\nabla c+\frac{c}{k_BT}\nabla\psi_\eta}}.
\end{equation}
Here, $\psi_\eta$ is a ``barrier'' potential localized near the membrane:
\begin{align}
\psi_\eta(\mb{x})&=\int_{\Gamma_\text{ref}} 
\varphi_\eta(\mb{x}-\mb{X}(\bm{\theta},t))dm_{\Gamma_\text{ref}}\\
\varphi_\eta(\mb{x})&=\eta^{-3}\varphi_1\paren{\frac{\mb{x}}{\eta}}, 
\end{align}
where $\varphi_1(\mb{x})$ is a positive  
function of compact support. 
For simplicity, we shall take $\varphi(\mb{x})$ to 
be radially symmetric and nonzero only if $\mb{x}\leq 1$. 
The parameter $\eta>0$ is the width of the barrier potential.  
As $\eta\to 0$, one would expect the 
barrier to be so high that the solute will be effectively blocked
from crossing the membrane. 

The fluid velocity field satisfies the 
Stokes equation in $\Omega\backslash \Gamma$ with an external 
force term that comes from the presence of solutes interacting 
with the barrier:
\begin{equation}\label{stokes_pl}
\nu \Delta \mb{u}-\nabla p= c\nabla \psi_\eta.
\end{equation}
At the boundary, we suppose that the velocity field is 
continuous and that the stress tensor $\Sigma_m(\mb{u},p)$
satisfies the following jump condition:
\begin{align}\label{stokes_plbc}
\jump{\Sigma_m(\mb{u},p)\mb{n}}&=\mb{F}_\text{elas}+\mb{F}_\text{sol},\\
\mb{F}_\text{sol}\abs{\PD{\mb{X}}{\bm{\theta}}}
&=\int_\Omega c\nabla \varphi_\eta(\mb{x}-\mb{X}(\bm{\theta},t))d\mb{x}.
\end{align}
\end{subequations}
The force $\mb{F}_\text{sol}$ should be seen as a reaction 
force to the $c\nabla \psi$ term in \eqref{stokes_pl}.
The membrane velocity $\PD{\mb{X}}{t}$ and $\mb{u}$ satisfy
relation \eqref{cont}. The following energy identity is satisfied by
this system:
\begin{equation}
\begin{split}\label{en_pl}
&\D{}{t}\int_\Omega \paren{k_BT c\ln c+E_\text{elas}+c\psi_\eta}d\mb{x}\\
=&-\int_\Omega Dc\abs{\nabla \paren{\ln c+\frac{\psi_\eta}{k_BT}}}^2d\mb{x}
+\int_\Gamma \paren{\mb{F}_\text{elas}+\mb{F}_\text{sol}}
\cdot \mb{n}j_wdm_\Gamma.
\end{split}
\end{equation}
We leave the verification of this identity to the reader.
The above energy will be monotonically decreasing if $j_w$ is a 
linear function of $\mb{F}_\text{elas}+\mb{F}_\text{sol}$ of negative slope.

Let us now consider the limit as $\eta\to 0$. 
First, consider the advection-diffusion equation satisfied by the solute.
The $\eta \to 0$ limit corresponds to the barrier potential 
becoming infinitely thin and high, and thus, this limit should lead to 
\eqref{appeq} in $\Omega\backslash \Gamma$ and 
the impermeable boundary condition \eqref{appbc} for the solute.
 
To examine the limiting behavior of the 
Stokes equation rewrite \eqref{stokes_pl} and \eqref{stokes_plbc}
in the following {\em immersed boundary} fashion:
\begin{equation}\label{stokes_plib}
\nu \Delta \mb{u}-\nabla p= c\nabla \psi_\eta
+\mb{f}_\text{sol}+\mb{f}_\text{elas}.
\end{equation}
Here $\mb{f}=\mb{f}_\text{sol}$ or $\mb{f}_\text{elas}$ are surface measures 
supported on $\Gamma$ whose action on a test function $\mb{v}$ is given by:
\begin{equation}
\dual{\mb{f}}{\mb{v}}=
\int_\Gamma \mb{F}(\mb{X})\cdot \mb{v}(\mb{X})dm_\Gamma
\end{equation}
where $\mb{F}=\mb{F}_\text{sol}$ or $\mb{F}_\text{elas}$. 
Equation \eqref{stokes_plib} is thus to be understood in the sense 
of distributions. The interface conditions \eqref{stokes_plbc}
are now expressed in terms of singular force fields. This 
reformulation of interface boundary conditions is a key 
ingredient in the immersed boundary method \cite{ibmethod}.

Given that $c\nabla \psi_\eta$ and $\mb{f}_\text{sol}$
are forces of action and reaction, as $\eta\to 0$, one expects 
$c\nabla \psi+\mb{f}_\text{sol}$ to go to $0$ in a distributional sense. 
The right hand side of \eqref{stokes_plib} will thus be reduced to 
$\mb{f}_\text{elas}$. Equation \eqref{stokes_plib} will then 
be equivalent imposing the boundary condition \eqref{appbc} 
at $\Gamma$ to the Stokes equation satisfied in $\Omega\backslash\Gamma$. 

In view of \eqref{appen}, \eqref{en_pl} and the discussion 
after these equations, we would like to 
show that $\mb{F}_\text{sol}$ approaches $[c]k_BT$, the van t'Hoff 
expression of osmotic pressure, as $\eta\to 0$. If this is the case, 
we can view system \eqref{sys_pl} as giving a mechanical interpretation
of osmotic pressure. In this view, osmotic pressure is generated 
by solute molecules hitting the impermeable membrane. We now sketch 
a boundary layer calculation that lends credence to this claim. 
Take a point on $\Gamma$ and let this point be the origin without 
loss of generality. Assume $\Gamma$ is locally flat. Let $\Gamma$ 
coincide with the $x-y$ plane and let $z$ be the direction 
normal to $\Gamma$ the positive $z$ axis to be pointing 
into $\Omega_e$. Assume furthermore, that 
$\abs{\PD{\mb{X}}{\bm{\theta}}}=\text{const}$ over this flat region.
Consider the solute flux for $z>0$:
\begin{equation}
J_z=\PD{c}{z}+\frac{c}{k_BT}\PD{\psi_\eta}{z}
\end{equation}
Introduce the stretched boundary layer coordinate $\eta Z=z$:
\begin{equation}
\eta J_z=\PD{c}{Z}+\frac{c}{k_BT}\PD{\psi_\eta}{Z}
\end{equation}
We conclude, to leading order, that
\begin{equation}
c(Z)=c(Z=1)\exp(-\psi_\eta(Z)/k_BT)+o(1).
\end{equation}
In original coordinates, we have, for $0<z<\eta$,
\begin{equation}
c(\mb{x}',z)=c(\mb{x}',\eta)\exp\paren{-\frac{\psi_\eta(z)}{k_BT}}+o(1)
\end{equation}
where $\mb{x}'=(x,y)$ and  
$o(1)$ denotes terms that tend to $0$ as $\eta \to 0$.
Likewise, for $-\eta<z<0$, we have:
\begin{equation}
c(\mb{x}',z)=c(\mb{x}',-\eta)\exp\paren{-\frac{\psi_\eta(z)}{k_BT}}+o(1).
\end{equation}
Let us now compute $\mb{F}_\text{sol}$ at $x=y=0$.
\begin{equation}
\begin{split}
\mb{F}_\text{sol}
\abs{\PD{\mb{X}}{\bm{\theta}}}&
=\int_{\abs{\mb{x}}\leq \eta}c\nabla \varphi_\eta(\mb{x})d\mb{x}\\
&=\int_{\abs{\mb{x}}\leq \eta,z\geq 0}c\nabla \varphi_\eta(\mb{x})d\mb{x}
+\int_{\abs{\mb{x}}\leq \eta,z\leq 0}c\nabla \varphi_\eta(\mb{x})d\mb{x}\\
&\equiv \paren{\mb{F}^+_\text{sol}+\mb{F}^-_\text{sol}}
\abs{\PD{\mb{X}}{\bm{\theta}}}.
\end{split}
\end{equation}
where we used the fact that $\varphi_\eta$ is supported in 
$\abs{\mb{x}}\leq \eta$. 
Consider the first term:
\begin{equation}
\begin{split}
\mb{F}^+_\text{sol}
\abs{\PD{\mb{X}}{\bm{\theta}}}
=&\int_{\abs{\mb{x}}\leq \eta,z\geq 0}c\nabla \varphi_\eta(\mb{x})d\mb{x}\\
=&
\int_{\abs{\mb{x}}\leq \eta,z\geq 0}\paren{c(\mb{x}',\eta)
\exp\paren{-\frac{\psi_\eta(z)}{k_BT}}\nabla \varphi_\eta(\mb{x})}d\mb{x}+o(1)\\
=&\int_{\abs{\mb{x}}\leq \eta,z\geq 0}\paren{c(\mb{x}',\eta)
\exp\paren{-\frac{\psi_\eta(z)}{k_BT}}
\PD{}{z}\varphi_\eta(\mb{x}',z)}d\mb{x'}dz\mb{e}_z
+o(1)
\end{split}
\end{equation}
where $\mb{e}_z$ is the unit coordinate vector in the $z$ direction.
Note that:
\begin{equation}
\psi_\eta=\abs{\PD{\mb{X}}{\bm{\theta}}}^{-1}
\int_{\abs{\mb{y}'}\leq \eta}\varphi_\eta(\mb{y}',z)d\mb{y}'
\end{equation}
near the origin. Here, we used the fact that the membrane is flat
and that $\abs{\PD{\mb{X}}{\theta}}$ is constant.
Using this, we have:
\begin{equation}
\begin{split}
\mb{F}^+_\text{sol}&=
\int_{z\geq 0}\paren{c(0,\eta)
\exp\paren{-\frac{\psi_\eta(z)}{k_BT}}\PD{}{z}\psi_\eta(z)}dz\mb{e}_z
+o(1)\\
&=\paren{-c(0,\eta)k_BT+c(0,0)k_BT\exp\paren{-\frac{\psi_\eta(0)}{k_BT}}}
\mb{e}_z+o(1).
\end{split}
\end{equation}
Likewise, we have:
\begin{equation}
\mb{F}^-_\text{sol}=
\paren{c(0,-\eta)k_BT-c(0,0)k_BT\exp\paren{-\frac{\psi_\eta(0)}{k_BT}}}
\mb{e}_z+o(1).
\end{equation}
Thus, as $\eta \to 0$, we have:
\begin{equation}
\mb{F}_\text{sol}=(c(0,0-)-c(0,0+))k_BT\mb{e}_z
\end{equation}
as desired.

We emphasize that the above calculation is purely symbolic.
We are not attempting to prove that the solutions
to \eqref{sys_pl} approach the solutions to \eqref{appsys}
in the $\eta\to 0$ limit.

\vspace{5mm}

\section*{Acknowledgments} 
The authors gratefully acknowledge support from the following sources:
National Science Foundation (NSF) grant DMS-0914963,
the Alfred P. Sloan Foundation and the McKnight Foundation (to YM),
NSF grant DMS-0707594 (to CL), and National Institutes of Health 
grant GM076013 (to RSE). The authors are grateful to
the Institute of Mathematics and its Application (IMA) 
at the University of Minnesota at which much of the discussion
took place. YM thanks Charles S. Peskin for pointing to 
reference \cite{finkelstein1987water} and experimental 
observations on osmosis described therein.

\bibliographystyle{elsarticle-num}
\bibliography{mylib}

\end{document}